\documentclass[12pt]{article}
\usepackage{fullpage}

\usepackage{bm}
\usepackage[dvipsnames]{xcolor}
\usepackage{amsmath,amssymb}\usepackage{authblk}
\usepackage{amsthm}
\usepackage{wrapfig}
\usepackage{epsfig}
\usepackage{graphicx}
\usepackage{makeidx}
\usepackage{multicol}
\usepackage{tikz}
\usepackage{algorithm}
\usepackage{algpseudocode}
\usepackage{soul}

\usepackage{array,multirow,booktabs,bbm} 

\usepackage{algorithmicx}
\usepackage{algpseudocode}
\makeatletter
\algrenewcommand\ALG@beginalgorithmic{\sffamily\small}
\makeatother
\usepackage{epstopdf}
\usepackage{mathrsfs,mathtools}
\usepackage{xcolor}
\usepackage{enumitem}
\usepackage{comment}




\DeclareGraphicsRule{*}{mps}{*}{}


\newcommand{\bs}[1]{\boldsymbol{#1}}

\def \ri {{\rm i}}

\usepackage{enumerate}

\newtheorem{proposition}{Proposition}[section]
\newtheorem{theorem}{Theorem}[section]

\theoremstyle{remark}
\newtheorem{remark}{Remark}[section]

\numberwithin{equation}{section}
\numberwithin{figure}{section}
\numberwithin{table}{section}
\numberwithin{algorithm}{section}

\usepackage[title]{appendix}

\usepackage[colorlinks,linkcolor=red]{hyperref}

\graphicspath{{./Figs/}}

\usepackage{multirow}

\allowdisplaybreaks
\usepackage{algorithm,algpseudocode,float}

\usepackage{lipsum}
\makeatletter
\newenvironment{breakablealgorithm}
  {
   \begin{center}
     \refstepcounter{algorithm}
     \hrule height.8pt depth0pt \kern2pt
     \renewcommand{\caption}[2][\relax]{
       {\raggedright\textbf{\ALG@name~\thealgorithm} ##2\par}%
       \ifx\relax##1\relax 
         \addcontentsline{loa}{algorithm}{\protect\numberline{\thealgorithm}##2}%
       \else 
         \addcontentsline{loa}{algorithm}{\protect\numberline{\thealgorithm}##1}%
       \fi
       \kern2pt\hrule\kern2pt
     }
  }{
     \kern2pt\hrule\relax
   \end{center}
  }
\makeatother

\usepackage{longtable}

\usepackage{subfig}
\usepackage{float}
\usepackage{caption}

\usepackage{hyperref}
\usepackage{marvosym}

\newcommand\keywords[1]{\textbf{Keywords}: #1}

\title{An Energy-Conserving Fourier Particle-in-Cell Method with Asymptotic-Preserving Preconditioner for Vlasov-Amp\`ere System with Exact Curl-Free Constraint
}
\author{Zhuoning Li
\footnote{ \tt{hkht0526@sjtu.edu.cn}}}
\author{Zhenli Xu
\footnote{ \tt{xuzl@sjtu.edu.cn}}}
\author{Zhiguo Yang
\footnote{ \tt{yangzhiguo@sjtu.edu.cn}}}

\begin{footnotesize}
\affil{School of Mathematical Sciences, MOE-LSC and CMA-Shanghai, Shanghai Jiao Tong University, Shanghai 200240, China.} 
\end{footnotesize}
\date{}

\begin{document}

\maketitle

\begin{abstract}
We present an efficient and accurate energy-conserving implicit particle-in-cell~(PIC) algorithm for the electrostatic Vlasov system, with particular emphasis on its high robustness for simulating complex plasma systems with multiple physical scales. This method consists of several indispensable elements: (\romannumeral1) the reformulation of the original Vlasov-Poisson system into an equivalent Vlasov-Amp\`ere system with divergence-/curl-free constraints; (\romannumeral2) a novel structure-preserving Fourier spatial discretization, which exactly preserves these constraints at the discrete level; (\romannumeral3) a preconditioned Anderson-acceleration algorithm for the solution of the highly nonlinear system; and (\romannumeral4) a linearized and uniform approximation of the implicit Crank-Nicolson scheme for various Debye lengths, based on the generalized Ohm's law, which serves as an asymptotic-preserving preconditioner for the proposed method. Numerical experiments are conducted, and comparisons are made among the proposed energy-conserving scheme, the classical leapfrog scheme, and a Strang operator-splitting scheme to demonstrate the superiority of the proposed method, especially for plasma systems crossing physical scales.
\end{abstract}

\keywords{Vlasov-Amp\`ere system, energy conservation, structure-preserving Fourier method, asymptotic-preserving preconditioner, operator splitting method}

\section{Introduction}

The Vlasov equation is a fundamental kinetic model of collisionless plasmas, which describes the evolution of the probability distribution function of electrically charged particles in six-dimensional phase space under self-induced and/or externally imposed electromagnetic fields \cite{10.1088/978-0-7503-1200-4}. This self-consistent coupling between the Vlasov equation and Maxwell or Poisson equation through the charge density and current density terms is highly nonlinear, especially when multiple physical and time scales co-exist in the system.

The high dimensionality, nonlinearity, multi-scale nature and various mathematical structures and properties of the system pose formidable challenges in the numerical simulation of plasma kinetic models \cite{MARKIDIS20101509,Pezzi2019ViDAAV}. One widely-used method to overcome the ``curse of dimensionality" issue is the particle-in-cell~(PIC) method  \cite{langdon1992enforcing,lapenta2015kinetic,Muga1970,VILLANI200271,birdsall2018plasma,hockney2021computer,dawson1983particle,2008Computational}. It approximates the Vlasov equation by Newton's second law of motion for a sequence of macro particles, and the interplay between macro particles and the electromagnetic fields is through the Lorenz force, charge density and current density terms, calculated via particle-grid interpolations and projections \cite{2017SMILEI}.

Nowadays, much effort have been made to develop structure-preserving PIC methods that can preserve the inherent physical properties and mathematical structures of the plasma system, such as charge \cite{crouseilles_respaud_2011,1999Exact,VILLASENOR1992306}, momentum \cite{MASON1981233,RICCI2002117,COHEN198215} and energy \cite{doi:10.1063/1.4979989,CHACON2016578} conservations, curl-free constraint of the electric field~(under the zero-magnetic limit) and divergence-free constraint of the magnetic field \cite{kraus2017gempic,campos2022variational}, and the Hamiltonian structure of the physical system \cite{MORRISON1980383,LI2019381,GU2022111472}. Among them, special attention is paid to the energy-conserving schemes \cite{lapenta2011particle,CHEN20142391,GONZALEZHERRERO2018162}, as they can effectively overcome finite-grid instability \cite{COHEN1989151,HEWETT1987121} and mitigate the particle self-heating or self-cooling issues \cite{HOCKNEY197119}. There is a vast amount of literature on energy-conserving schemes for the Vlasov-Maxwell~(VM) system \cite{chen2020semi,MARKIDIS20117037,lapenta2011particle,CHEN2019632,Ji2022AnAA} but relatively few studies for Vlasov-Poisson~(VP) equations \cite{chen2011energy}. This is mainly because Gauss's law in the VP system plays the role of a constraint, and the electric field responds instantaneously to charge density \cite{langdon1992enforcing}. Chen \emph{et al.} \cite{chen2011energy} proposed an energy-conserving fully-implicit PIC scheme for the one-dimensional VP system, in which the critical step is to rewrite the VP system into an equivalent Vlasov-Amp\`ere~(VA) system. Nevertheless, this technique could not be directly extended to a high-dimensional case. In two and three dimensions, the electric field in the Maxwell-Amp\`ere equation is required to be irrotational, and an artificial divergence-free variable needs to be introduced to guarantee the equivalence of the VP/VA systems. The curl-free and divergence-free constraints must be preserved exactly at the discrete level to ensure energy conservation. The first contribution of this paper is to propose a structure-preserving Fourier discretization method that precisely preserves these constraints, which, together with the time-centred Crank-Nicolson~(CN) scheme \cite{CN1996,Sun2004589}, leads to an energy-conserving scheme for the high-dimensional VA system.

In the numerical simulation of plasma systems, one primary and challenging issue is handling quasi-neutrality. Specifically, when the Debye length and plasma period are small compared to the space and time scales, it is called a quasi-neutral system \cite{fushion1974,plasma1986}. Though the proposed fully-implicit PIC scheme preserves the total energy, as the problem approaches the quasi-neutral limit, the nonlinear coupling dominates, and the resultant system becomes increasingly difficult to solve. The second contribution of this paper is to propose an asymptotic-preserving preconditioner, which is a linearized and uniform approximation concerning the Debye length for the fully-implicit scheme and remains non-degenerate even if the Debye length goes to zero. This idea originates from the asymptotic-preserving reformulation proposed by Degond \emph{et al.} \cite{degond2017asymptotic}~(see \cite{JinAP,Jinreview} for a review of the asymptotic-preserving schemes), where a generalized Ohm's law obtained from the Vlasov equation is used to discretize the current density in the Amp\`ere equation. The fully-implicit energy-conserving scheme, in combination with the asymptotic-preserving preconditioner and a preconditioned Anderson-acceleration algorithm, significantly improves the computational efficiency of the proposed method. In addition to the fully-implicit method,  we offer an energy conservation algorithm that decouples the updation of particle positions from the solution of particle velocities and electromagnetic fields. This is obtained with the help of operator splitting \cite{doi:10.1137/120871791,2022Zhong,1968Strang} and the structure-preserving Fourier discretizations \cite{Ameres2018Stochastic}. Comparisons are made among the proposed methods and the classical leapfrog scheme to demonstrate the scope of applicability of these methods.

The outline of this paper is as follows. In Section \ref{sec: model}, the reformulation of the Vlasov-Poisson system into the Vlasov-Amp\`ere system with divergence-free and curl-free constraints is introduced and the equivalence of the VP and VA models is verified. Section \ref{sec: div-curl-free-expansion} is denoted to the structure-preserving Fourier discretizations, which guarantee exact preservations of these constraints. A fully-implicit energy-conserving scheme with an asymptotic-preserving preconditioner and its solution algorithm based on Anderson-acceleration method are proposed in Section \ref{sec: fully-implicit}. Section \ref{sec: strang-splitting} is for an energy-conserving Strang operator-splitting scheme, which further reduces the computational cost of the fully-implicit scheme. Various numerical experiments are conducted in Section \ref{sec: numerical} to show the accuracy, efficiency and robustness of the proposed method. Finally, we conclude in Section \ref{sec: conclusion} with some closing remarks.

\section{Electrostatic Vlasov-Poisson/Amp\`ere system}
\label{sec: model}

Under the zero-magnetic field limit, a collisionless plasma is often described by the Vlasov-Poisson~(VP) system as follows:
\begin{subequations}
\begin{align}
    \displaystyle &\partial_{t}f_{s}(\bs{x},\bs{v},t)+\bs{v}\cdot\nabla_{\bs{x}}f_{s}(\bs{x},\bs{v},t)+\frac{q_{s}}{m_{s}}\bs{E}(\bs{x},t)\cdot\nabla_{\bs{v}}f_{s}(\bs{x},\bs{v},t)=0, \label{eq: classical-VP-Vlasov}\\
    \displaystyle &\nabla\cdot\bs{E}(\bs{x},t)=\frac{\rho(\bs{x},t)}{\epsilon_{0}},  \label{eq: classical-VP-Gauss}\\
    \displaystyle &\bs{E}(\bs{x},t)=-\nabla\phi(\bs{x},t). \label{eq: classical-VP-phi}
\end{align}
\label{eq: classical-VP}%
\end{subequations}
Here $f_{s}(\bs{x},\bs{v},t)$ is the distribution function of particles of species $s=1,\ldots,n$ at position $\bs x\in \Omega_{\bs x}$ with velocity $\bs v\in \Omega_{\bs{v}}$ at time $t\in \mathbb{R}^{+}$, $\bs{E}(\bs x,t)$ and $\phi(\bs x,t)$ are the electric field and electric potential, respectively, $\epsilon_{0}$ is the dielectric permittivity in vacuum, $q_{s}$ and $m_{s}$  are the valence and mass of particles of species $s$.
Number density of particles of species $s$, charge density $\rho(\bs x,t)$ and current density
$\bs J(\bs x,\bs{v},t)$ are defined as
\begin{equation}\label{eq: rhoj}
n_s(\bs{x},t)=\int_{\Omega_{\bs{v}}}f_{s}d\bs{v},\quad \rho(\bs{x},t)=\sum\limits_{s}q_{s}\int_{\Omega_{\bs{v}}}f_{s}d\bs{v}, \quad  \bs{J}(\bs{x},\bs{v},t)=\sum\limits_{s}q_{s}\int_{\Omega_{\bs{v}}}f_{s}\bs{v}d\bs{v}.
\end{equation}
By taking the integral of the Vlasov equation \eqref{eq: classical-VP-Vlasov} with respect to the velocity field and summing over $s$, $\rho(\bs x,t)$ and $\bs J(\bs x,\bs{v},t)$ are related to each other by the resultant charge continuity equation
\begin{equation}\label{eq: chargeeq}
    \displaystyle \frac{\partial\rho(\bs x,t)}{\partial t}+\nabla\cdot\bs{J}(\bs x,\bs{v},t)=0.
\end{equation}
The VP system is supplemented with the following initial condition
\begin{equation}\label{eq: initial}
 f_{s,0}(\bs x,\bs v)=f_s(\bs{x},\bs{v},0),\quad \bs E_0(\bs x)=\bs E(\bs x,0),
\end{equation}
and the given initial data needs to satisfy the compatible condition for the well-posedness of the problem
\begin{equation}\label{eq: wellp}
    \displaystyle \nabla\cdot\bs{E}_0(\bs{x})=\frac{\rho_0}{\epsilon_0}=\frac{1}{\epsilon_{0}} \sum\limits_{s}q_{s}\int_{\Omega_{\bs{v}}}f_{s,0}(\bs x,\bs v)d\bs{v}.
\end{equation}
For simplicity, we assume that $f_s$ and $\bs E$ satisfy the periodic boundary condition.

Without loss of generality, from now on we consider the case where ions with unit positive charge $e>0$ form a homogeneous motionless background, and electrons with charge $-e$ are the only species in this system.  Define the Debye length $\lambda_D$ and the electron plasma period $\tau_{p}$ by
\begin{equation}
\lambda_{D}=\sqrt{\frac{\epsilon_{0}k_B T_e}{e^{2}n_{e}}}  ,\quad  \tau_{p}=\sqrt{\frac{m_{e}\epsilon_{0}}{e^{2}n_{e}}},
\end{equation}
where $k_B$ is the Boltzmann constant, $T_e$ is the temperature, $m_e$ and $n_e$ are the mass and number density of electrons, respectively. We briefly comment on the non-dimensionalization procedure, which has been addressed in detail in \cite{degond2017asymptotic}. Let $x_{0}$ denote the length scale, $t_0$ the time scale, $E_0$ the electric intensity scale, and $n_0$ the number density scale. Besides, the charge and mass of particles are normalized by unit charge $e$ and mass of electron $m_{e}$ individually, and normalization parameter of velocity is $v_{0}=x_{0}/t_{0}$. In the case of uniformly-distributed ions, we take $n_{0}=n_{i}$, with $n_{i}$ the number density of ions.

\begin{table}[H]
 	\centering
     \begin{tabular}{c |c| c |c}
     \toprule[1.5pt]
       variables/parameters & normalization  &variables/parameters  & normalization   \\\midrule[1pt]
       $\bs x$   & $x_0$ & $t$ & $t_0$\\
       $\bs v$ & $v_0$ & $f$ & $n_0/v_0$\\
       $n$ & $n_0$ & $q$ & $q_e$\\
       $\bs E$ & $E_0$ & $\bs J$ & $en_{0}v_{0}$\\
       $m$& $m_{e}$ & $\rho$ & $en_{0}$\\
       \bottomrule[1.5pt]
     \end{tabular}
     \caption{Normalization of variables and simulation parameters}
     \label{tab: normalization}
 \end{table}

 By consistently normalizing the physical variables and parameters using Table \ref{tab: normalization} and assuming ${ex_0E_0}/{(m_e v_0^2)}=1$, $v_{0}B_{0}/E_{0}=1$, $v_{0}/v_{th,0}=1$, which is compatible with the most common assumptions of the MHD models \cite{biskamp1997nonlinear}, the resultant non-dimensionalized system will retain the same form as the dimensional one as follows:
 \begin{subequations}
\begin{align}
    \displaystyle &\partial_{t}f_s(\bs{x},\bs{v},t)+\bs{v}\cdot\nabla_{\bs{x}}f_s(\bs{x},\bs{v},t)-\bs{E}(\bs{x},t)\cdot\nabla_{\bs{v}}f_s(\bs{x},\bs{v},t)=0\\
    \displaystyle &\lambda^{2}\nabla\cdot\bs{E}(\bs{x},t)=\rho(\bs{x},t)\label{eq: dimensionless-mP-Poisson}\\
    \displaystyle &\bs{E}(\bs{x},t)=-\nabla\phi(\bs{x},t)\label{eq: dimensionless-mP-phi}
\end{align}
\label{eq: dimensionless-mP}%
\end{subequations}
where $\lambda=\lambda_D/x_0$. Considering our one-species assumption, the subscript $s$ of distribution function $f$ is omitted, and $\rho(\bs{x},t)=1-n(\bs{x},t)$. All the variables and parameters have been appropriately normalized based on Table \ref{tab: normalization} hereunder unless otherwise specified.

The main difficulty of developing energy-conserving schemes for system \eqref{eq: dimensionless-mP} resides in Gauss's law \eqref{eq: dimensionless-mP-Poisson}, in which the change of the electric field depends instantaneously on the density function. To overcome this obstacle, system \eqref{eq: dimensionless-mP} is reformulated into the following Vlasov-Amp\`ere (VA) system:
\begin{subequations}\label{eq: classical-VA}
\begin{align}
    \displaystyle &\partial_{t}f+\bs{v}\cdot\nabla_{\bs{x}}f-\bs{E}\cdot\nabla_{\bs{v}}f=0,\label{eq: classical-VA-Vlasov}\\
    \displaystyle &\lambda^{2}\partial_{t}\bs{E}-\bs{\Theta}+\bs{J}=\bs 0,\label{eq: classical-VA-Ampere}\\
    \displaystyle &\nabla\times\bs{E}=\bs{0},\label{eq: classical-VA-curlE}\\
    \displaystyle &\nabla\cdot\bs{\Theta}\label{eq: classical-VA-divG}=0,
\end{align}
\end{subequations}
where $\bs{\Theta}(\bs{x},t)$ is an artificial solenoidal field to guarantee the equivalence of VP and VA models, as is shown in Theorem \ref{thm:equivalence1}.

\begin{theorem}\label{thm:equivalence1}
The Vlasov-Poisson system \eqref{eq: dimensionless-mP} and the Vlasov-Amp\`ere system \eqref{eq: classical-VA} are equivalent, given that $\lambda^{2}\nabla\cdot\bs{E}(\bs{x})=\rho(\bs{x})$ is satisfied at $t=0$.
\end{theorem}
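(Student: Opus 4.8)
The plan is to establish the two implications separately, taking the shared Vlasov equation \eqref{eq: classical-VA-Vlasov} and the charge continuity equation \eqref{eq: chargeeq}—which follows by integrating the Vlasov equation over $\bs v$ and summing over species, and hence holds automatically in both systems—as the common backbone.

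For the direction VP $\Rightarrow$ VA, I would keep $f$ and $\bs E$ unchanged: then $\nabla\times\bs E=-\nabla\times\nabla\phi=\bs 0$ gives \eqref{eq: classical-VA-curlE} at once, and I would \emph{define} the auxiliary solenoidal field by $\bs\Theta:=\lambda^{2}\partial_{t}\bs E+\bs J$, so that \eqref{eq: classical-VA-Ampere} holds by construction. To verify \eqref{eq: classical-VA-divG}, differentiate Gauss's law \eqref{eq: dimensionless-mP-Poisson} in time, exchange $\partial_{t}$ with $\nabla\cdot$, and substitute $\partial_{t}\rho=-\nabla\cdot\bs J$ from \eqref{eq: chargeeq}; this yields $\nabla\cdot(\lambda^{2}\partial_{t}\bs E+\bs J)=0$, i.e. $\nabla\cdot\bs\Theta=0$.

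For the converse VA $\Rightarrow$ VP, again carry $f$ and $\bs E$ over. The potential $\phi$ is produced from \eqref{eq: classical-VA-curlE}—a curl-free field admits a scalar potential, so $\bs E=-\nabla\phi$, which recovers \eqref{eq: dimensionless-mP-phi} (on the periodic cell this additionally requires that $\bs E$ carry no harmonic component, a normalization consistent with the structure-preserving Fourier representation of Section~\ref{sec: div-curl-free-expansion}; I would state this hypothesis explicitly). To recover Gauss's law, take the divergence of \eqref{eq: classical-VA-Ampere}, use $\nabla\cdot\bs\Theta=0$ from \eqref{eq: classical-VA-divG} and $\nabla\cdot\bs J=-\partial_{t}\rho$ from \eqref{eq: chargeeq} to obtain $\partial_{t}\big(\lambda^{2}\nabla\cdot\bs E-\rho\big)=0$; hence $\lambda^{2}\nabla\cdot\bs E-\rho$ is independent of $t$, and since it vanishes at $t=0$ by the stated compatibility condition, it vanishes for all $t$, which is exactly \eqref{eq: dimensionless-mP-Poisson}.

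The bookkeeping steps—relabeling the unknowns, checking \eqref{eq: classical-VA-Ampere} with the chosen $\bs\Theta$, and the two divergence computations—are immediate once continuity is in hand, so the only delicate point is the passage from $\nabla\times\bs E=\bs 0$ to the existence of $\phi$ on the torus, where a curl-free field need not be a gradient because of the domain's cohomology. I would pin down the mild normalization (vanishing harmonic/mean part) under which this holds and check that it is compatible with the initial data and with the discrete spaces of Section~\ref{sec: div-curl-free-expansion}. I would also record the smoothness assumptions on $f$ and $\bs E$ needed to differentiate under the integral in \eqref{eq: chargeeq} and to commute $\partial_{t}$ with the spatial derivatives.
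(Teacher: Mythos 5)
Your proposal is correct and follows essentially the same route as the paper: defining $\bs\Theta:=\lambda^{2}\partial_{t}\bs E+\bs J$ and checking its solenoidality via Gauss's law plus continuity for one direction, and taking the divergence of the Amp\`ere equation with $\nabla\cdot\bs\Theta=0$ and continuity to get $\partial_{t}(\lambda^{2}\nabla\cdot\bs E-\rho)=0$ for the converse. Your added caveat about the harmonic (mean) component of a curl-free field on the periodic cell is a legitimate refinement that the paper's proof silently passes over, but it does not change the argument.
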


\begin{proof}
(VP $\Rightarrow VA$):   Let us first derive the proposed Vlasov-Amp\`ere equations from the Vlasov-Poisson equations. One takes the temporal derivative of Eq. \eqref{eq: dimensionless-mP-Poisson} to obtain
\begin{equation}
    \displaystyle \lambda^{2}\nabla\cdot\partial_{t}\bs{E}=\partial_{t}\rho.
\end{equation}
The charge continuity equation \eqref{eq: chargeeq} is then substituted into the above, which leads to
\begin{equation}
\nabla\cdot\Big( \lambda^{2}\partial_{t}\bs{E}+\bs{J} \Big)=0.
\end{equation}
This means $\epsilon_{0}\lambda^2\partial_{t}\bs{E}+\bs{J}$ is a solenoidal field, which can be represented by an auxiliary divergence-free field $\bs{\Theta}$, i.e.,
\begin{equation}\label{eq: newAmp}
    \displaystyle \lambda^{2}\partial_{t}\bs{E}+\bs{J}=\bs{\Theta}.
\end{equation}
Due to Eq. \eqref{eq: dimensionless-mP-phi} and the fact that $\bs \Theta$ is solenoidal, one arrives at the VA system \eqref{eq: classical-VA}.\\
~\\
\noindent(VP $\Leftarrow$ VA): Now it remains to recover the VP system \eqref{eq: dimensionless-mP} from the VA system \eqref{eq: classical-VA}. Firstly, Eq. \eqref{eq: dimensionless-mP-phi} can be directly obtained by the irrotational property of $\bs E$ in Eq. \eqref{eq: classical-VA-curlE}. Then one takes the divergence of Eq. \eqref{eq: classical-VA-Ampere} to obtain
\begin{equation}
    \displaystyle \lambda^{2} \nabla\cdot\partial_{t}\bs{E}+\nabla\cdot\bs{J}=0,
\end{equation}
where the divergence-free constraint of $\bs{\Theta}$ in Eq. \eqref{eq: classical-VA-divG} has been used. Again inserting the charge continuity equation \eqref{eq: chargeeq} into the above, one arrives at
\begin{equation}
\partial_{t}( \nabla\cdot \lambda^{2} \bs{E}-\rho)=0,
\end{equation}
which indicates that the Gauss's law \eqref{eq: dimensionless-mP-Poisson} is guaranteed if it is obeyed at $t=0$.
\end{proof}

\begin{remark} The idea of developing energy-conserving scheme via the reformulation of the VP system to the VA system can be traced back to Chen \emph{et al.} \cite{chen2011energy}, where they proposed the following one-dimensional VA reformulation:
\begin{subequations}
\begin{align}
  \displaystyle &\partial_{t}f+v \partial_x f-E \partial_v f=0,\label{eq: 1d-VA-Vlasov}\\
 &\lambda^{2}\frac{\partial E}{\partial t}+J=\langle J\rangle,\label{eq: 1d-Amp}
\end{align}
\end{subequations}
and $\langle J\rangle=\int Jdx/\int dx$ is proven to be a constant and independent of space $x$  and time $t$. However, for a system with two/three spatial dimensions, $\langle J\rangle$ in Eq. \eqref{eq: 1d-Amp} will be replaced by a solenoidal field depending on $\bs x$ and $t$ as shown in Eq. \eqref{eq: newAmp} and numerical difficulties are induced in satisfying the curl-and divergence-free constraints in Eqs. \eqref{eq: classical-VA-curlE}-\eqref{eq: classical-VA-divG}. It is worthwhile to point out that the reformulation of the Poisson equation into the curl-free constrained Amp\`ere equation has been explored in \cite{qiao2022amp} for the Poisson-Nernst-Planck system, where a local curl-free relation iterative algorithm~(originated in \cite{maggs2002local} for Coulomb interactions) is adopted to deal with these constraints, and it was further shown in \cite{qiao2022scheme} that structure-preserving schemes can be constructed based on the Amp\`ere formulation.
\end{remark}

The energy conservation law of the Vlasov-Amp\`ere system \eqref{eq: classical-VA} is expounded in Theorem \ref{thm: VA-conservation-law}.
\begin{theorem}[Energy conservation law of the Vlasov-Amp\`ere system]
The Vlasov-Amp\`ere system and its equivalent Vlasov-Poisson system with the periodic boundary condition satisfy the following energy conservation law:
\begin{equation}\label{eq: ecva}
        \displaystyle \frac{d}{dt}\Big( \frac{\lambda^{2}}{2}\int_{\Omega_{\bs x}}|\bs{E}|^{2}d\bs{x}+\sum\limits_{s}\frac{1}{2}\iint_{\Omega_{\bs x}\times \Omega_{\bs v}} f|\bs{v}|^{2}d\bs{x}d\bs{v}\Big)= 0.
    \end{equation}
    \label{thm: VA-conservation-law}
\end{theorem}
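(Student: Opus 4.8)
The strategy is the classical energy estimate: I would differentiate the field energy $\tfrac{\lambda^2}{2}\int_{\Omega_{\bs x}}|\bs E|^2\,d\bs x$ and the kinetic energy $\sum_s\tfrac12\iint_{\Omega_{\bs x}\times\Omega_{\bs v}}f|\bs v|^2\,d\bs x\,d\bs v$ separately in time, substitute the Amp\`ere equation \eqref{eq: classical-VA-Ampere} into the former and the Vlasov equation \eqref{eq: classical-VA-Vlasov} into the latter, and check that the two ``rate-of-work'' contributions $\pm\int_{\Omega_{\bs x}}\bs E\cdot\bs J\,d\bs x$ cancel, while the auxiliary field $\bs\Theta$ makes no net contribution thanks to the constraints \eqref{eq: classical-VA-curlE}--\eqref{eq: classical-VA-divG}. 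By Theorem \ref{thm:equivalence1} it suffices to establish \eqref{eq: ecva} for the VA system.

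First I would treat the kinetic term. Multiplying \eqref{eq: classical-VA-Vlasov} by $\tfrac12|\bs v|^2$ and integrating over $\Omega_{\bs x}\times\Omega_{\bs v}$, the transport term is a perfect $\bs x$-divergence ($\tfrac12|\bs v|^2\nabla_{\bs x}\cdot(f\bs v)$ since $\nabla_{\bs x}\cdot\bs v=0$) and integrates to zero by spatial periodicity; the acceleration term, after integration by parts in $\bs v$ (using $\nabla_{\bs v}\cdot(\tfrac12|\bs v|^2\bs E)=\bs E\cdot\bs v$ because $\bs E$ is independent of $\bs v$, and discarding the velocity-boundary terms by the usual decay of $f$ as $|\bs v|\to\infty$), produces $-\int_{\Omega_{\bs x}}\bs E\cdot(\int_{\Omega_{\bs v}}f\bs v\,d\bs v)\,d\bs x$, which equals $\int_{\Omega_{\bs x}}\bs E\cdot\bs J\,d\bs x$ after recalling from \eqref{eq: rhoj} that $\bs J=-\int_{\Omega_{\bs v}}f\bs v\,d\bs v$ in the one-species normalization (more generally $\bs J=\sum_s q_s\int f_s\bs v\,d\bs v$). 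Hence $\frac{d}{dt}\big(\sum_s\tfrac12\iint f|\bs v|^2\big)=\int_{\Omega_{\bs x}}\bs E\cdot\bs J\,d\bs x$.

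Next, for the field term, $\frac{d}{dt}(\tfrac{\lambda^2}{2}\int_{\Omega_{\bs x}}|\bs E|^2\,d\bs x)=\lambda^2\int_{\Omega_{\bs x}}\bs E\cdot\partial_t\bs E\,d\bs x$; inserting $\lambda^2\partial_t\bs E=\bs\Theta-\bs J$ from \eqref{eq: classical-VA-Ampere} gives $\int_{\Omega_{\bs x}}\bs E\cdot\bs\Theta\,d\bs x-\int_{\Omega_{\bs x}}\bs E\cdot\bs J\,d\bs x$. Because $\nabla\times\bs E=\bs 0$ I would write $\bs E=-\nabla\phi$ with $\phi$ periodic, so that $\int_{\Omega_{\bs x}}\bs E\cdot\bs\Theta\,d\bs x=\int_{\Omega_{\bs x}}\phi\,\nabla\cdot\bs\Theta\,d\bs x=0$ by \eqref{eq: classical-VA-divG} and periodicity (the boundary term vanishing). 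Adding the two rates yields $\int_{\Omega_{\bs x}}\bs E\cdot\bs J\,d\bs x-\int_{\Omega_{\bs x}}\bs E\cdot\bs J\,d\bs x=0$, which is exactly \eqref{eq: ecva}.

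The one step needing genuine care is the vanishing of $\int_{\Omega_{\bs x}}\bs E\cdot\bs\Theta\,d\bs x$: on the periodic box a curl-free and a divergence-free field are $L^2$-orthogonal only up to their constant (harmonic) Fourier modes, so one must invoke the standard gauge/quasi-neutrality normalization that $\bs E$ has zero spatial mean — equivalently that the potential $\phi$ in $\bs E=-\nabla\phi$ is itself periodic — which is consistent with $\int_{\Omega_{\bs x}}\rho\,d\bs x=0$ and is preserved by the VA dynamics. Everything else is routine integration by parts, with the velocity-space boundary terms controlled by the decay (or compact support) of $f$ in $\bs v$.
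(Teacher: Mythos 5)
Your proof follows essentially the same route as the paper's: multiply the Vlasov equation by $\tfrac{1}{2}|\bs v|^{2}$ and integrate, test the Amp\`ere equation with $\bs E$, and cancel the two $\int_{\Omega_{\bs x}}\bs E\cdot\bs J\,d\bs x$ contributions after showing $\int_{\Omega_{\bs x}}\bs\Theta\cdot\bs E\,d\bs x=0$ by writing $\bs E=-\nabla\phi$ and using $\nabla\cdot\bs\Theta=0$ with periodicity. Your additional remark that writing $\bs E=-\nabla\phi$ with periodic $\phi$ (equivalently, orthogonality of the curl-free and divergence-free parts up to the constant Fourier modes) requires the zero-mean normalization of $\bs E$ is a legitimate refinement of a point the paper's proof passes over silently, but it does not alter the argument.
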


\begin{proof}
By multiplying Eq. \eqref{eq: classical-VA-Vlasov}  by $|\bs{v}|^{2}/2$ and integrating the resultant equation over $\Omega_{\bs{x}}$, $\Omega_{\bs{v}}$, we have
\begin{equation}
    \displaystyle\frac{1}{2} \iint_{\Omega_{\bs x}\times \Omega_{\bs v}}\left[f|\bs{v}|^{2}+(\bs{v}\cdot\nabla_{\bs{x}}f)|\bs{v}|^{2}-(\bs{E}\cdot\nabla_{\bs{v}}f)|\bs{v}|^{2}    \right]d\bs{x}d\bs{v}=0.
    \label{eq: proof-exact-conservation-multiply-v}
\end{equation}
Then through taking integration by parts and imposing periodic or Dirichlet boundary conditions, the second term in the above equation vanishes, and the last term becomes
\begin{equation}\label{eq: 3terms}
 \frac{1}{2} \iint_{\Omega_{\bs x}\times \Omega_{\bs v}}  (\bs{E}\cdot\nabla_{\bs{v}}f)|\bs{v}|^{2} d\bs{x}d\bs{v}=-\int_{\Omega_{\bs x}} \bs E\cdot \Big(\int_{\Omega_{\bs v}} f \bs v d\bs v \Big) d\bs x=\int_{\Omega_{\bs x}}  \bs E\cdot \bs J d\bs x.
\end{equation}
We multiply $\bs E$ with Eq. \eqref{eq: classical-VA-Ampere} and integrate the resultant equation over $\Omega_{\bs{x}}$ to obtain
\begin{equation}\label{eq: E2}
\frac{d}{dt} \int_{\Omega_{\bs x}}  \frac{\lambda^{2}}{2} | \bs E|^2 d\bs x-\int_{\Omega_{\bs x}} \bs \Theta\cdot \bs Ed\bs x+\int_{\Omega_{\bs x}}  \bs E\cdot \bs J d\bs x=0.
\end{equation}
By the fact that $\nabla \times \bs E=0$ in Eq. \eqref{eq: classical-VA-curlE}, we can rewrite $\bs E$ by $\bs E=-\nabla \phi$ and insert it into the second term of the above equation, during which we find
\begin{equation}\label{eq: GEvanish}
\int_{\Omega_{\bs x}} \bs \Theta\cdot \bs Ed \bs x=0,
\end{equation}
where the integration by parts, the divergence-free property of $\bs \Theta$ in Eq \eqref{eq: classical-VA-divG} and the boundary conditions have been used. Thus, the sum of these resultant Eqs. \eqref{eq: 3terms}, \eqref{eq: E2} and the definition of $\bs J$ in Eq. \eqref{eq: rhoj} directly give rise to the desired energy conservation law in Eq. \eqref{eq: ecva}.
\end{proof}

\section{A novel family of Fourier basis preserving the curl-/divergence-free constraints}
\label{sec: div-curl-free-expansion}

In this section, we propose a systematic way to construct a Fourier approximation basis preserving the divergence-free or curl-free constraints point-wisely. Without loss of generality, we consider the VA system in $\Omega_{\bs x}=[0,L]^d,\; d=2\ \text{and}\ 3$ with periodic boundary condition. Let us denote the scalar Fourier basis function by
\begin{equation}\label{eq: fep}
E_{\bs p}(\bs x)=e^{\frac{2\pi \ri }{L}\bs p \cdot \bs x}, \quad \bs p\in \mathbb{N}^d, \quad \mathbb{N}=[-M/2+1,\cdots,0,1, \cdots M/2]^d,\;\;d=2,3,
\end{equation}
and $\{\bs{e}_{i}\}_{i=1}^d$ the canonical basis vectors along each coordinate axis.
The construction of desired structure-preserving vectorial Fourier basis functions relies on the derivative relation and orthogonal property of $E_{\bs p}(\bs x)$:
\begin{itemize}
\item[1]  \underline{The derivative relation}:
\begin{equation}\label{eq: derivp}
 \frac{\partial E_{\bs p}(\bs x)}{\partial x_j}=\frac{2\pi \ri }{L} p_j  E_{\bs p}(\bs x),
\end{equation}
\item[2] \underline{The orthogonal property}:
\begin{equation}\label{eq: orth}
\frac{1}{L^d}\int_{[0,L]^d}E_{\bs p}(\bs x) \bar E_{\bs q}(\bs x)d\bs x=\delta_{\bs p,\bs q},\;\; \bs p,\bs q\in  \mathbb{N}^d,
\end{equation}
where $\delta_{\bs p,\bs q}$ is the Dirac delta function.
\end{itemize}

Under 2D Cartesian coordinates and given an arbitrary vector field $\bs v(\bs x)=(v_1,v_2)^T$, the curl and divergence operators respectively take the form
\begin{equation}\label{eq: curldiv}
\nabla \times \bs v=\frac{\partial v_2}{\partial x_1}- \frac{\partial v_1}{\partial x_2},\quad \nabla \cdot \bs v=\frac{\partial v_1}{\partial x_1}+ \frac{\partial v_2}{\partial x_2}.
\end{equation}
For any square-integral periodic function $\bs v$, it can be wonderfully approximated by
\begin{equation}\label{eq: der1}
\bs v_{N}=\sum_{(m,n)\in \mathbb{N}^2} (v_{mn}^1 \bs e_1+v_{mn}^2 \bs e_2)E_{(m,n)}(\bs x).
\end{equation}
By taking the divergence operator for $\bs v_{N}$ and presuming that $\nabla \cdot \bs v_{N}=0$, one directly obtains from the derivative relation \eqref{eq: derivp} that
\begin{equation}\label{eq: der2}
\nabla \cdot \bs v_{N}=\frac{2\pi \ri }{L} \sum_{(m,n)\in \mathbb{N}^2} (m v_{mn}^1 +n v_{mn}^2 )E_{(m,n)}(\bs x)=0,
\end{equation}
which together with the orthogonal property \eqref{eq: orth} of $\{ E_{(m,n)}(\bs x) \}$ under $L^2$ inner product leads to $m v_{mn}^1 +n v_{mn}^2=0,\;\; \forall m,n\in \mathbb{N}$. This means that $\{v_{mn}^1, v_{mn}^2 \}$ can be represented by only one free variable except in the case when $m=n=0$:
\begin{equation}\label{eq: der4}
v_{mn}^1= n \hat v_{mn} , \quad v_{mn}^2= -m \hat v_{mn} , \;\; (m,n)\in \mathbb{N}^2_{+}= \mathbb{N}^2/(0,0).
\end{equation}
Similarly, by taking the curl operator and using the derivative relation and orthogonal property in Eqs. \eqref{eq: derivp}-\eqref{eq: orth}, the expansion coefficients of the curl-free vector function $\bs v_N$ satisfy $mv_{mn}^2-nv_{mn}^1  =0,\;\;\forall m,n\in \mathbb{N}$, thus $\{v_{mn}^1, v_{mn}^2 \}$, in this case, can be represented by
\begin{equation}\label{eq: der6}
v_{mn}^1= m \tilde v_{mn} , \quad v_{mn}^2= n \tilde v_{mn} , \;\; (m,n)\in \mathbb{N}^2_{+}= \mathbb{N}^2/(0,0).
\end{equation}
Consequently, we summarize the 2D divergence-free and curl-free Fourier basis in Proposition \ref{prop: p1}.

\begin{proposition} \label{prop: p1}
Define
\begin{equation}\label{eq: dc}
\bs d_{mn}=(n,-m)^T, \quad \bs c_{mn}=(m,n)^T, \quad (m,n)\in \mathbb{N}^2_{+},
\end{equation}
the 2D divergence-free Fourier basis takes the form
\begin{equation}
\mathbb{D}^{2}_N={\rm span}\left\{ \big\{ \bs d_{mn} E_{(m,n)}(\bs x)   \big \}_{(m,n)\in \mathbb{N}^2_{+}}, \bs e_1,\bs e_2 \right\},
\end{equation}
while the 2D curl-free Fourier basis reads
\begin{equation}\label{eq: C2}
\mathbb{C}^{2}_N={\rm span}\left\{ \big\{ \bs c_{mn} E_{(m,n)}(\bs x)   \big \}_{(m,n)\in \mathbb{N}^2_{+}}, \bs e_1,\bs e_2 \right\}.
\end{equation}
\end{proposition}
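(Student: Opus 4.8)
The plan is to identify $\mathbb{D}^2_N$ (resp. $\mathbb{C}^2_N$) with the subspace of divergence-free (resp. curl-free) elements of the full vectorial Fourier space $V_N=\bigl\{\sum_{(m,n)\in\mathbb{N}^2}(v_{mn}^1\bs e_1+v_{mn}^2\bs e_2)E_{(m,n)}(\bs x)\bigr\}$ appearing in \eqref{eq: der1}, and then to observe that the proposed generators are linearly independent so that they actually form bases of these subspaces. The computation is essentially the one already carried out in the discussion preceding the proposition; the statement merely packages it, so the proof is short.

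First I would prove the inclusion ``divergence-free $\subseteq\mathbb{D}^2_N$''. Take an arbitrary $\bs v_N\in V_N$ written as in \eqref{eq: der1} and impose $\nabla\cdot\bs v_N\equiv 0$. Applying the derivative relation \eqref{eq: derivp} term by term gives $\nabla\cdot\bs v_N=\tfrac{2\pi\ri}{L}\sum_{(m,n)\in\mathbb{N}^2}(mv_{mn}^1+nv_{mn}^2)E_{(m,n)}(\bs x)$, which is again a trigonometric polynomial supported on the same finite frequency set. By the orthogonality \eqref{eq: orth} it vanishes identically iff every coefficient vanishes, i.e. $mv_{mn}^1+nv_{mn}^2=0$ for all $(m,n)\in\mathbb{N}^2$. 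For $(m,n)\neq(0,0)$ this is a single nontrivial linear equation in $(v_{mn}^1,v_{mn}^2)$, whose solution set is the line spanned by $(n,-m)^T=\bs d_{mn}$; hence $v_{mn}^1\bs e_1+v_{mn}^2\bs e_2=\hat v_{mn}\bs d_{mn}$ for some scalar $\hat v_{mn}$, recovering the parametrization \eqref{eq: der4}. For $(m,n)=(0,0)$ the equation is vacuous, so the constant mode $v_{00}^1\bs e_1+v_{00}^2\bs e_2$ is an arbitrary constant vector, i.e. an element of $\tspan\{\bs e_1,\bs e_2\}$. The reverse inclusion only requires checking that each generator of $\mathbb{D}^2_N$ is divergence-free: by \eqref{eq: derivp}, $\nabla\cdot(\bs d_{mn}E_{(m,n)})=\tfrac{2\pi\ri}{L}(mn-nm)E_{(m,n)}=0$, and constants trivially have zero divergence. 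Linear independence of $\{\bs d_{mn}E_{(m,n)}\}_{(m,n)\in\mathbb{N}^2_+}\cup\{\bs e_1,\bs e_2\}$ follows from the orthogonality \eqref{eq: orth} of the scalar modes together with $\bs d_{mn}\neq\bs 0$ on $\mathbb{N}^2_+$, which finishes the divergence-free case.

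The curl-free case is entirely parallel: imposing $\nabla\times\bs v_N\equiv 0$ and using \eqref{eq: derivp}--\eqref{eq: orth} yields $mv_{mn}^2-nv_{mn}^1=0$, whose nontrivial solution for $(m,n)\neq(0,0)$ is the line spanned by $(m,n)^T=\bs c_{mn}$ (this is precisely \eqref{eq: der6}), while the zero frequency is again unconstrained and contributes $\bs e_1,\bs e_2$ as in \eqref{eq: C2}; one verifies $\nabla\times(\bs c_{mn}E_{(m,n)})=\tfrac{2\pi\ri}{L}(nm-mn)E_{(m,n)}=0$, and independence follows as before.

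I do not anticipate a genuine obstacle. The only points needing a little care are: (i) that a finite Fourier sum vanishes iff each of its coefficients vanishes — this is exactly where orthogonality \eqref{eq: orth} enters, and where it matters that $\bs v_N$ is a trigonometric polynomial rather than an arbitrary distribution; and (ii) the degenerate role of the zero frequency $(0,0)$, for which the divergence/curl constraint collapses to $0=0$ and therefore contributes the two extra constant generators $\bs e_1,\bs e_2$ instead of a one-dimensional line. It is also worth remarking in passing that $\bs d_{mn}\cdot\bs c_{mn}=nm-mn=0$ and $\det(\bs d_{mn},\bs c_{mn})=n^2+m^2>0$ on $\mathbb{N}^2_+$, so the two families are mutually $L^2$-orthogonal mode by mode and, together with the shared constant modes, span all of $V_N$; this is the discrete Helmholtz-type splitting that underlies the structure-preserving scheme.
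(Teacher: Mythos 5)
Your proposal is correct and follows essentially the same route as the paper: expand an arbitrary vectorial trigonometric polynomial, impose the divergence-free (resp.\ curl-free) constraint, and use the derivative relation \eqref{eq: derivp} together with orthogonality \eqref{eq: orth} to reduce each nonzero mode to the one-parameter family spanned by $\bs d_{mn}$ (resp.\ $\bs c_{mn}$), with the zero mode contributing $\bs e_1,\bs e_2$ — exactly the derivation in \eqref{eq: der1}--\eqref{eq: der6}. Your added checks (reverse inclusion, linear independence, and the mode-wise orthogonality of the two families) are correct refinements of the same argument rather than a different approach.
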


In 3D Cartesian coordinates, the curl and divergence operators take the form
\begin{equation}\label{eq: curl3d}
\nabla \times \bs v=\Big( \frac{\partial v_3}{\partial x_2}- \frac{\partial v_2}{\partial x_3}, \; \frac{\partial v_1}{\partial x_3}-\frac{\partial v_3}{\partial x_1},\; \frac{\partial v_2}{\partial x_1}- \frac{\partial v_1}{\partial x_2} \Big)^T,\quad \nabla \cdot \bs v=\frac{\partial v_1}{\partial x_1}+\frac{\partial v_2}{\partial x_2}+\frac{\partial v_3}{\partial x_3},
\end{equation}
for any $\bs v(\bs x)=(v_1,v_2,v_3)^T$. Similarly, for any square-integral periodic function $\bs v$ in $\Omega_{\bs x}=[0,L]^3$, it can be expanded by
\begin{equation}\label{eq: 3dexp}
\bs v_{N}=\sum_{(m,n,l)\in \mathbb{N}^3} (v_{mn}^1 \bs e_1+v_{mn}^2 \bs e_2+v_{mn}^3 \bs e_3)E_{(m,n,l)}(\bs x).
\end{equation}
With the help of the expression \eqref{eq: 3dexp}, the derivative relation \eqref{eq: derivp}, the orthogonal property \eqref{eq: orth} and the fact that the vector function $\bs v_{N}$ is divergence-free or curl-free, one can derive the corresponding Fourier bases for three dimensions, which are summarized in Proposition \ref{prop: p2}.

\begin{proposition}\label{prop: p2}
Define two sets of tensors $\big \{ \bs d^1_{mnl}, \bs d^2_{mnl}\big\}_{(m,n,l)\in \mathbb{N}^3_{+}}$ for  $\mathbb{N}^3_{+}=\mathbb{N}^3/ (0,0,0)^T$,
\begin{equation}
\begin{aligned}\label{eq: 3dzeta}
& \bs d^1_{mnl}=(n,-m,0)^T,\quad  \bs d^2_{mnl}=(l,0,-m)^T, \qquad m\neq0,\\
& \bs d^1_{0nl}=\bs e_1, \quad  \bs d^2_{0nl}=(0,-l,n), \qquad m=0,\;\; n\neq0,\\
&  \bs d^1_{00l}=\bs e_1, \quad \bs d^2_{00l}=\bs e_2,\qquad m=n=0, \;l\neq 0,
\end{aligned}
\end{equation}
3D divergence-free Fourier basis takes the form
\begin{equation}
\mathbb{D}^{3}_N={\rm span}\big( \big\{ \bs d^1_{mnl} E_{(m,n,l)}(\bs x) ,\,\bs d^2_{mnl} E_{(m,n,l)}(\bs x)  \big \}_{(m,n,l)\in \mathbb{N}^3_{+}}, \bs e_1,\bs e_2,\bs e_3 \big).
\end{equation}
While for tensors $\big \{ \bs c_{mnl}=(m,n,l)^T\big\}_{(m,n,l)\in \mathbb{N}^3_{+}}$, 3D curl-free Fourier basis reads
\begin{equation}\label{eq: C3}
\mathbb{C}^{3}_N={\rm span}\big( \big\{ \bs c_{mnl} E_{(m,n,l)}(\bs x)   \big \}_{(m,n,l)\in \mathbb{N}^3_{+}}, \bs e_1,\bs e_2, \bs e_3 \big).
\end{equation}
\end{proposition}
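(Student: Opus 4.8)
The plan is to carry out, componentwise, the same derivation that precedes Proposition~\ref{prop: p1}, now using the 3D formulas \eqref{eq: curl3d}. First I would plug the Fourier ansatz \eqref{eq: 3dexp} into $\nabla\cdot\bs v_N$ (resp. $\nabla\times\bs v_N$), move the differential operator onto the scalar modes via the derivative relation \eqref{eq: derivp}, and invoke the orthogonality \eqref{eq: orth}: the vanishing of the image is then equivalent to a mode-by-mode algebraic constraint on the coefficient triple $\bs u_{mnl}:=(v^1_{mnl},v^2_{mnl},v^3_{mnl})^T$, so the problem reduces to describing, for each $(m,n,l)$, the linear subspace of admissible $\bs u_{mnl}$ and exhibiting a basis for it. Completeness of $\{E_{\bs p}\}$ in $L^2$ then assembles the global statement.

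For the divergence-free case the constraint is $m v^1_{mnl}+n v^2_{mnl}+l v^3_{mnl}=0$, i.e. $\bs c_{mnl}\cdot\bs u_{mnl}=0$ with $\bs c_{mnl}=(m,n,l)^T$. At the zero mode $(0,0,0)$ this is vacuous, contributing ${\rm span}\{\bs e_1,\bs e_2,\bs e_3\}$; for $(m,n,l)\in\mathbb{N}^3_{+}$ the admissible set is the plane $\bs c_{mnl}^{\perp}$ of dimension two, so it suffices to check that $\bs d^1_{mnl},\bs d^2_{mnl}$ from \eqref{eq: 3dzeta} lie in $\bs c_{mnl}^{\perp}$ and are linearly independent. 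I would verify this in the three exhaustive subcases of \eqref{eq: 3dzeta} ($m\neq0$; $m=0,\,n\neq0$; $m=n=0,\,l\neq0$): each orthogonality $\bs c_{mnl}\cdot\bs d^i_{mnl}=0$ is a one-line inner product, and independence is immediate coordinate-by-coordinate. Two independent vectors in a two-dimensional space form a basis, which yields $\mathbb{D}^{3}_N$.

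For the curl-free case the three components of $\nabla\times\bs v_N=\bs 0$ give, for each mode, $n v^3_{mnl}-l v^2_{mnl}=l v^1_{mnl}-m v^3_{mnl}=m v^2_{mnl}-n v^1_{mnl}=0$, i.e. $\bs c_{mnl}\times\bs u_{mnl}=\bs 0$. At the zero mode this is again vacuous, giving ${\rm span}\{\bs e_1,\bs e_2,\bs e_3\}$; for $(m,n,l)\in\mathbb{N}^3_{+}$, solving the $2\times2$ subsystem obtained from a nonzero entry of $\bs c_{mnl}$ and checking that the remaining equation is then automatically satisfied shows the solution set is the line $\mathbb{R}\,\bs c_{mnl}$, so this mode is spanned by $\bs c_{mnl}E_{(m,n,l)}$, which yields $\mathbb{C}^{3}_N$.

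The main obstacle is organizational rather than conceptual: one must verify that the three-case definition of $\{\bs d^1_{mnl},\bs d^2_{mnl}\}$ in \eqref{eq: 3dzeta} exhausts $\mathbb{N}^3_{+}$ without gaps and, crucially, that linear independence does not collapse at the case boundaries — which is exactly why $(n,-m,0)^T$ is replaced by $\bs e_1$ when $m=0$ and by $\bs e_1,\bs e_2$ when $m=n=0$. A secondary remark is that a real-valued field additionally requires conjugate symmetry of the coefficients across $\pm(m,n,l)$; this constrains only the coefficients and not the spanning vectors, so Proposition~\ref{prop: p2} as stated is unaffected.
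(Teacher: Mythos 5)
Your proposal is correct and follows essentially the same route the paper takes: the paper proves Proposition~\ref{prop: p2} only by pointing to the 2D derivation (insert the expansion \eqref{eq: 3dexp}, use the derivative relation \eqref{eq: derivp} and orthogonality \eqref{eq: orth} to reduce to mode-by-mode algebraic constraints), and your mode-wise description of the constraint spaces $\bs c_{mnl}^{\perp}$ and $\mathbb{C}\,\bs c_{mnl}$, together with the case-by-case check of \eqref{eq: 3dzeta}, is exactly that argument carried out in detail. Your closing remarks on case exhaustiveness and conjugate symmetry are accurate and do not change the statement.
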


\begin{remark}
The proposed structure-preserving Fourier method serves as a proper way for discretizing the solenoidal field $\bs \Theta$
and the irrotational field $\bs E$. In what follows, one observes that under a suitable Galerkin formulation, the discretization of  $\bs \Theta$ is not necessary, and only the curl-free basis is utilized. Nevertheless, the divergence-free discretization developed here will play a significant role for the solution of Vlasov-Maxwell system, especially for the exact preservation of the magnetic Gauss's law.
\end{remark}

\section{An energy-conserving PIC method with asymptotic-preserving preconditioner}
\label{sec: fully-implicit}

\subsection{Particle-in-cell discretization for Vlasov equation}

The PIC method discretizes the Vlasov equation by a sequence of macro particles, with the advantage of reducing the solution of the Vlasov equation in the six-dimensional phase space into solving Newton's second law of macro particles. To be more specific, the distribution function $f$ is approximated by
\begin{equation}
    \displaystyle f(\bs{x},\bs{v},t)=\sum\limits_{p=1}\limits^{N}w_{p}S(\bs{x}-\bs{x}_{p})\delta(\bs{v}-\bs{v}_{p}),
    \label{eq: dsitribution-f-pic}
\end{equation}
where $\bs{x}_{p}$, $\bs{v}_{p}$ and $w_{p}$ are the position, velocity and weight of the macro particle $p$, individually, and $N$ is the total number of macro particles. $\delta(\bs v)$ is the Dirac delta function, and $S(\bs{x})$ is the shape function of the macro particle, chosen as a particular compactly-supported symmetric function with unit integral $\int_{\Omega_{x}}S(\bs{x})d\bs{x}=1$. There are several standard options for the shape function, such as B-spline basis function \cite{2017SMILEI}, cosine function with a cut-off radius, Gaussian function and polynomials with unit integral (see, e.g. \cite{2006High}).

By substituting Eq. \eqref{eq: dsitribution-f-pic} into the Vlasov equation \eqref{eq: classical-VA-Vlasov}, and taking the first-order momentum of the Vlasov equation with respect to $\bs x$ and $\bs v$ over $\Omega_{\bs{x}}$ and $\Omega_{\bs{v}}$, respectively, together with the properties of shape function $S(\bs x)$, one arrives at a sequence of the particle motion equations for $p=1,\ldots,N$:
\begin{equation}\label{eq: motioneq}
\frac{d\bs{x}_{p}}{dt}=\bs{v}_{p},\quad \frac{d\bs{v}_{p}}{dt}=-\bs{E}_{p}, \;\;{\rm with}\;\; \bs{E}_{p}=\int_{\Omega_{\bs{x}}}\bs{E}(\bs{x})S(\bs{x}-\bs{x}_{p})d\bs{x}.
\end{equation}

\subsection{Strong and weak formulation of the particle-Amp\`ere system}
Consequently, the resultant particle-Amp\`ere system consisting of Eqs. \eqref{eq: motioneq}, \eqref{eq: classical-VA-Ampere}-\eqref{eq: classical-VA-divG} are summarized as follows:
\begin{subequations}\label{eq: pva}
\begin{align}
    \displaystyle &\frac{d\bs{x}_{p}}{dt}=\bs{v}_{p},\label{eq: pnewton1}\\
    \displaystyle &\frac{d\bs{v}_{p}}{dt}=-\bs{E}_{p},\label{eq: pnewton2}\\
    \displaystyle &\lambda^{2}\partial_{t}\bs{E}(\bs{x},t)-\bs{\Theta}(\bs{x},t)+\bs{J}(\bs{x},\bs{v},t)=\bs 0, \label{eq:pE1}\\
    \displaystyle &\nabla\times\bs{E}(\bs{x},t)=\bs{0},\label{eq: pE2}\\
    \displaystyle &\nabla\cdot\bs{\Theta}(\bs{x},t)=0,\label{eq: pE3}
\end{align}
\label{eq: dimensionless-motion-Ampere}%
\end{subequations}
where the current density $\bs{J}$ is consistently discretized as
\begin{equation}\label{eq: sourcenj}
    \displaystyle \bs{J}(\bs{x},\bs{v},t)=-\sum\limits_{p=1}\limits^{N_{s}}w_{p}S(\bs{x}-\bs{x}_{p})\bs{v}_{p},
\end{equation}
and in similar manner, the number density $n$, charge density $\rho$ are approximated by macro particles as
\begin{equation}\label{eq: sourcenj}
    \displaystyle n(\bs{x},t)=\sum\limits_{p=1}\limits^{N_{s}}w_{p}S(\bs{x}-\bs{x}_{p}),\quad\rho(\bs{x},t)=1-n(\bs{x},t).
\end{equation}

\begin{remark}
Without external sources, the particle-Amp\`ere system \eqref{eq: pva} satisfies the energy conservation law, i.e., the total energy
\begin{equation}
     \displaystyle E_{total}(t)=\frac{\lambda^{2}}{2}\int_{\Omega_{\bs{x}}}|\bs{E}|^{2}d\bs{x}+\sum\limits_{p=1}\limits^{N_{s}}\frac{w_{p}}{2}\bs{v}_{p}^{2}
 \end{equation}
 remains constant with time. And it is crucial for numerical schemes to preserve the energy conservation law for accurate and robust long-time simulations, especially when there exist multiple physical scales and large time steps are preferred.
\end{remark}

In order to propose our numerical scheme, we briefly introduce some basic notations of Sobolev spaces. Let $L^2_{\rm per}(\Omega_{\bs x})$ be the Hilbert space of square-integrable periodic functions on $\Omega_{\bs x}$ with norm $\| \cdot \|_{\Omega_{\bs x}}$ and inner product $(\cdot, \cdot)_{\Omega_{\bs x}}$. We introduce
\begin{equation}\label{eq: Hcurldiv}
\begin{aligned}
& \bs H^{\rm curl}(\Omega_{\bs x})=\big\{\bs v\in (L^2_{\rm per}(\Omega_{\bs x}))^d, \nabla \times \bs v\in (L^2_{\rm per}(\Omega_{\bs x}))^d\big\},\\
&  \bs H^{\rm div}(\Omega_{\bs x})=\big\{\bs v\in (L^2_{\rm per}(\Omega_{\bs x}))^d, \nabla \cdot \bs v\in L^2_{\rm per}(\Omega_{\bs x})\big\},
\end{aligned}
\end{equation}
and denote
\begin{equation}
\begin{aligned}
& \bs H^{\rm curl,0}(\Omega_{\bs x}):=\{\bs v\in \bs H^{\rm curl}(\Omega_{\bs x}), \;\nabla \times \bs v=\bs 0  \} ,\\
&  \bs H^{\rm div,0}(\Omega_{\bs x}):=\{\bs v\in \bs H^{\rm div}(\Omega_{\bs x}), \;\nabla \cdot \bs v= 0  \}.
\end{aligned}
\end{equation}
We henceforward omit $\Omega_{\bs x}$ in the notations of the inner product and function spaces if there is no ambiguity.

The weak formulation of the particle-Amp\`ere system \eqref{eq: pva}  is summarized as follows: 
Find $\{\bs x_p, \;\bs v_p \}_{p=1}^{N}$ and $\bs E\in \bs H^{\rm curl,0}$, such that
\begin{subequations}\label{eq: pwva}
\begin{align}
    \displaystyle &\frac{d\bs{x}_{p}}{dt}=\bs{v}_{p},\label{eq: pweak1}\\
   \displaystyle &\frac{d\bs{v}_{p}}{dt}=-\int_{\Omega_{\bs{x}}}\bs{E}(\bs{x})S(\bs{x}-\bs{x}_{p})d\bs{x},\label{eq: pweak2}\\
    \displaystyle &\lambda^{2}( \partial_{t}\bs{E}, \bs \varphi)+(\bs{J}, \bs \varphi)=\bs 0,\;\; {\forall \bs \varphi \in \bs H^{\rm curl,0}}. \label{eq: pweak3}
\end{align}
\end{subequations}
\begin{remark}
Note that the term associated with $\bs{\Theta}$ vanishes from Eq. \eqref{eq: pweak3} as
\begin{equation*}
(\bs{\Theta}, \bs \varphi)=0,\;\;\forall\, \bs{\Theta}\in  \bs H^{\rm div,0}, \; \bs \varphi \in \bs H^{\rm curl,0}.
\end{equation*}
The proof of this equality  resembles the derivation for Eq. \eqref{eq: GEvanish}.
\end{remark}

In the following two sections, we propose two energy-conserving schemes for the particle-Amp\`ere system \eqref{eq: pwva}. One is based on the CN temporal discretization, where an asymptotic-preserving preconditioner is developed to accelerate the convergence speed. The other one is based on the Strang operator-splitting method. For both schemes, we emphasize the necessity of using the proposed spatial discretizations with exact curl-free constraint in order to preserve the energy conservation law.

\subsection{The fully-implicit energy-conserving scheme}
The following CN scheme can achieve the discrete energy conservation law: Given $\{\bs x_p^n, \;\bs v_p^n \}_{p=1}^{N}$ and $\bs E^n\in \bs H^{\rm curl,0}$, find $\{\bs x_p^{n+1}, \;\bs v_p^{n+1} \}_{p=1}^{N}$ $\in \mathbb{R}^d$ and $\bs E^{n+1/2}\in \mathbb{C}_N^d$ defined in Propositions \ref{prop: p1}-\ref{prop: p2}, such that
\begin{subequations}
\begin{align}
    \displaystyle &\frac{\bs{x}_{p}^{n+1}-\bs{x}_{p}^{n}}{\Delta t}=\bs{v}_{p}^{n+1/2}, \label{eq: schemeeq1}\\
    \displaystyle &\frac{\bs{v}_{p}^{n+1}-\bs{v}_{p}^{n}}{\Delta t}=-\int_{\Omega_{\bs{x}}}\bs{E}^{n+1/2}S(\bs{x}-\bs{x}_{p}^{n+1/2})d\bs{x},\label{eq: schemeeq2}\\
& \frac{2\lambda^{2}}{\Delta t}( \bs{E}^{n+1/2}-\bs E^n, \bs \varphi)+(\bs{J}^{n+1/2}, \bs \varphi)=\bs 0,\;\; \forall \bs \varphi \in \mathbb{C}_N^d, \label{eq: disEform}
\end{align}
    \label{eq: discrete-dimensionless-motion-Ampere}%
\end{subequations}
where variables discretized at the half-time steps are given by
\begin{equation}\label{eq:jn12}
    \begin{aligned}
    \displaystyle &\bs{x}_{p}^{n+1/2}=\frac{\bs{x}_{p}^{n+1}+\bs{x}_{p}^{n}}{2},\quad\bs{v}_{p}^{n+1/2}=\frac{\bs{v}_{p}^{n+1}+\bs{v}_{p}^{n}}{2},\quad\bs{E}^{n+1/2}=\frac{\bs{E}^{n+1}+\bs{E}^{n}}{2},\\
    &\bs{J}^{n+1/2}=\bs{J}(\bs{x}_{p}^{n+1/2},\bs{v}_{p}^{n+1/2})=-\sum\limits_{p=1}\limits^{N_{s}}w_{p}S(\bs{x}-\bs{x}_{p}^{n+1/2})\bs{v}_{p}^{n+1/2}.
    \end{aligned}
\end{equation}

\begin{theorem}\label{thm: thmec}
The fully-implicit CN scheme \eqref{eq: discrete-dimensionless-motion-Ampere} satisfies the discrete energy conservation law, namely,
\begin{equation}\label{eq: paconlaw}
\frac{1}{2}\lambda^{2}\int_{\Omega_{\bs{x}}}|\bs{E}^{n+1}|^{2}d\bs{x}+\sum\limits_{p=1}\limits^{N_{s}}\frac{1}{2}w_{p}|\bs{v}_{p}^{n+1}|^{2}
   = \frac{1}{2}\lambda^{2}\int_{\Omega_{\bs{x}}}|\bs{E}^n|^{2}d\bs{x}+\sum\limits_{p=1}\limits^{N_{s}}\frac{1}{2}w_{p}|\bs{v}_{p}^n|^{2}.
 \end{equation}
 \end{theorem}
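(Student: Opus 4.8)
The plan is to mimic, at the discrete level, the continuous energy argument of Theorem \ref{thm: VA-conservation-law}. First I would take the velocity update \eqref{eq: schemeeq2}, take the inner product (dot product in $\mathbb{R}^d$) of both sides with $\bs v_p^{n+1/2} = (\bs v_p^{n+1}+\bs v_p^{n})/2$, multiply by the weight $w_p$, and sum over $p=1,\dots,N_s$. On the left-hand side, the telescoping identity $(\bs v_p^{n+1}-\bs v_p^{n})\cdot(\bs v_p^{n+1}+\bs v_p^{n}) = |\bs v_p^{n+1}|^2 - |\bs v_p^{n}|^2$ converts it into the time-difference of the kinetic energy $\sum_p \tfrac12 w_p |\bs v_p|^2$, divided by $\Delta t$. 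On the right-hand side, I would use the definition of $\bs E_p$ together with the fact that $\bs x_p^{n+1/2}$ is exactly the argument appearing in the shape function in \eqref{eq:jn12}, so that $\sum_p w_p \bs v_p^{n+1/2}\cdot \int_{\Omega_{\bs x}} \bs E^{n+1/2} S(\bs x - \bs x_p^{n+1/2})\,d\bs x = -\int_{\Omega_{\bs x}} \bs E^{n+1/2}\cdot \bs J^{n+1/2}\,d\bs x$, i.e. $-(\bs E^{n+1/2}, \bs J^{n+1/2})$. This is the discrete analogue of \eqref{eq: 3terms}.

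Next I would take the weak Amp\`ere equation \eqref{eq: disEform} and choose the test function $\bs \varphi = \bs E^{n+1/2}$. This is legitimate precisely because the scheme enforces $\bs E^{n+1/2}\in \mathbb{C}_N^d$, which is a subspace of $\bs H^{\rm curl,0}$; this is exactly where the structure-preserving curl-free Fourier discretization of Section \ref{sec: div-curl-free-expansion} is indispensable. Substituting gives $\tfrac{2\lambda^2}{\Delta t}(\bs E^{n+1/2}-\bs E^n, \bs E^{n+1/2}) + (\bs J^{n+1/2}, \bs E^{n+1/2}) = 0$. Using $\bs E^{n+1/2} = (\bs E^{n+1}+\bs E^n)/2$ and the same telescoping identity in the $L^2$ inner product, $(\bs E^{n+1}-\bs E^n, \bs E^{n+1}+\bs E^n) = \|\bs E^{n+1}\|^2 - \|\bs E^n\|^2$, so the first term becomes $\tfrac{\lambda^2}{\Delta t}(\|\bs E^{n+1}\|^2 - \|\bs E^n\|^2)$. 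This is the discrete analogue of \eqref{eq: E2}, with the crucial simplification that the $\bs\Theta$ term never enters because it has already been projected out of the weak form (the discrete analogue of \eqref{eq: GEvanish}).

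Finally I would add the two scalar identities just obtained. The cross term $(\bs E^{n+1/2},\bs J^{n+1/2})$ appears with opposite signs in the two equations and cancels, leaving $\tfrac{\lambda^2}{\Delta t}(\|\bs E^{n+1}\|^2 - \|\bs E^n\|^2) + \tfrac{1}{\Delta t}\sum_p w_p(|\bs v_p^{n+1}|^2 - |\bs v_p^{n}|^2) = 0$; multiplying by $\Delta t/2$ and rearranging yields \eqref{eq: paconlaw}. I would also note that the position update \eqref{eq: schemeeq1} plays no role in the energy balance beyond ensuring consistency of the midpoint $\bs x_p^{n+1/2}$, and that periodicity is used implicitly only through the well-definedness of the Fourier basis and the $L^2$ inner product.

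The step requiring the most care — and the conceptual heart of the proof — is the second one: the admissibility of $\bs\varphi = \bs E^{n+1/2}$ as a test function and the resulting disappearance of $\bs\Theta$. In a naive discretization where $\bs E^{n+1/2}$ is only approximately curl-free, the identity $(\bs\Theta,\bs E^{n+1/2})=0$ would fail and an $O(\text{discretization error})$ defect in the energy balance would appear; the exact curl-free property built into $\mathbb{C}_N^d$ is what makes the cancellation exact. A minor point to verify along the way is that the shape-function argument $\bs x_p - \bs x_p^{n+1/2}$ used to define $\bs E_p$ in \eqref{eq: schemeeq2} coincides with the one used for $\bs J^{n+1/2}$ in \eqref{eq:jn12}, so that the particle-to-grid projection and grid-to-particle interpolation are exact transposes of each other; this consistency is what makes the cross term cancel identically rather than up to interpolation error.
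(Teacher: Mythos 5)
Your proposal is correct and follows essentially the same route as the paper's proof: test the weak Amp\`ere equation \eqref{eq: disEform} with $\bs\varphi=\bs E^{n+1/2}$ (admissible because $\bs E^{n+1/2}\in\mathbb{C}_N^d$, which is exactly curl-free), and use the velocity update \eqref{eq: schemeeq2} together with the matched shape-function argument $\bs x_p^{n+1/2}$ in \eqref{eq:jn12} to identify the $(\bs J^{n+1/2},\bs E^{n+1/2})$ term with the kinetic-energy difference, so the cross terms cancel. The only blemish is a benign factor-of-$2$ bookkeeping slip: since $(\bs E^{n+1/2}-\bs E^n,\bs E^{n+1/2})=\tfrac14\bigl(\|\bs E^{n+1}\|^2-\|\bs E^n\|^2\bigr)$, the field term is $\tfrac{\lambda^2}{2\Delta t}\bigl(\|\bs E^{n+1}\|^2-\|\bs E^n\|^2\bigr)$ rather than $\tfrac{\lambda^2}{\Delta t}(\cdot)$, but because your kinetic term carries the same extra factor the combined identity is just the correct one scaled by $2$ and the conclusion \eqref{eq: paconlaw} follows unchanged.
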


 \begin{proof}
 Let $\bs \varphi=\bs{E}^{n+1/2}$ in Eq. \eqref{eq: disEform}, we have
 \begin{equation}
    \frac{\lambda^{2}}{2} \left(\|\bs{E}^{n+1}\|^{2}-\|\bs{E}^{n}\|^{2}\right)+\int_{\Omega_{\bs{x}}}\bs{J}^{n+1/2}\cdot\bs{E}^{n+1/2}d\bs{x}=\bs 0.
    \label{eq: energy-conservation-exact-proof-1}
\end{equation}
By the definitions of $\bs J^{n+1/2}$ in Eq. \eqref{eq:jn12} and the particle motion equations \eqref{eq: schemeeq1}-\eqref{eq: schemeeq2}, the last term in Eq. \eqref{eq: energy-conservation-exact-proof-1} can be simplified into
\begin{equation}
\begin{aligned}
    \displaystyle \int_{\Omega_{\bs{x}}}\bs{J}^{n+1/2}\cdot\bs{E}^{n+1/2}d\bs{x}&=-\sum\limits_{p=1}\limits^{N_{s}}w_{p}\int_{\Omega_{\bs{x}}} \bs{E}^{n+1/2}(\bs{x})S(\bs{x}-\bs{x}_{p}^{n+1/2})d\bs{x}\,\bs{v}_{p}^{n+1/2}\\
    &=\sum\limits_{p=1}\limits^{N_{s}}w_{p}\frac{\bs{v}_{p}^{n+1}-\bs{v}_{p}^{n}}{\Delta t}\bs{v}_{p}^{n+1/2}=\frac{w_{p}}{2}\sum\limits_{p=1}\limits^{N_{s}}\left[\left(\bs{v}_{p}^{n+1}\right)^{2}-\left(\bs{v}_{p}^{n}\right)^{2}\right],
    \end{aligned}
\end{equation}
which together with Eq. \eqref{eq: energy-conservation-exact-proof-1} lead to the desired result \eqref{eq: paconlaw}.
 \end{proof}

 \begin{remark}
It can be observed clearly that the exact preservation of the constraint $\nabla \times \bs{E}=\bs 0$ at the discrete level is indispensable for achieving discrete energy law.
 \end{remark}

 Though the proposed method guarantees energy conservation, thus suitable for long-time simulations, its efficiency is greatly restricted by the sizeable nonlinear system to be solved. For instance, for a system discretized with $N$  macro particles in $d\times d_1$ phase space for the Vlasov equation using the PIC method and $M$ Fourier modes for each spatial coordinate of the Amp\`ere equation, the number of unknowns in the nonlinear system is $O(M^d+dd_1 N)$, which is extremely large due to the necessity to adopt large $N$ values  ($N\geq 10^5$) to reduce stochastic noises.

 One way to overcome this difficulty is to employ the particle enslavement technique proposed in \cite{chen2011energy}, which regards $\bs E^{n+1/2}$ as the only primary unknown and $\{\bs x_p^{n+1}, \bs v_p^{n+1} \}_{p=1}^{N}$ as the intermediate variables. Specifically, once $\bs E^{n+1/2}$ is given, $\{\bs x_p^{n+1}, \bs v_p^{n+1} \}_{p=1}^{N}$ can be uniquely determined by Eqs. \eqref{eq: schemeeq1}-\eqref{eq: schemeeq2}, thus $\bs J^{n+1/2}$ can be calculated by Eq. \eqref{eq:jn12}. This
suggests that $\bs J^{n+1/2}$ can be regarded as a function of $\bs E^{n+1/2}$, i.e., $\bs J^{n+1/2}=\bs J^{n+1/2}(\bs E^{n+1/2})$. Consequently, the dimensions of the nonlinear system are successfully reduced to $O(M^d)$.

To fix the idea, let us consider the case with two dimensions. Rewrite Eq. \eqref{eq: disEform} into the following equivalent form
\begin{equation}\label{eq: Es}
{2\lambda^{2}}( \bs{E}^{n+1/2}, \bs \varphi)+(\Delta t \bs{J}^{n+1/2} -2\lambda^2 \bs E^n, \bs \varphi)=\bs 0,\;\; \forall \bs \varphi \in \mathbb{C}_N^d,
\end{equation}
and denote $\Delta t \bs{J}^{n+1/2} -2\lambda^2 \bs E^n:=\bs f^n.$ Let us expand $\bs f$ by
\begin{equation}\label{eq: fexp}
\bs f=\sum_{(m,n)\in \mathbb{N}^2} (f_{mn}^1\bs e_1+f_{mn}^2 \bs e_2) E_{(m,n)}(\bs x),
\end{equation}
and expand $\bs E^{n+1/2}$ using the curl-free Fourier basis in Eq. \eqref{eq: C2},
\begin{equation}\label{eq: Eexp}
\bs E^{n+1/2}(\bs x)=\xi_{00}^1\bs e_1 E_{(0,0)}(\bs x)  +\xi_{00}^2 \bs e_2 E_{(0,0)}(\bs x) +\sum_{(m,n)\in \mathbb{N}^2_{+}} \xi_{mn} \bs c_{mn} E_{(m,n)}(\bs x),
\end{equation}
with the unknown coefficients $\{\xi_{mn}, \xi_{00}^1, \xi_{00}^2 \}$  reordered into a column vector $\bs \xi$.

Inserting the expansions \eqref{eq: fexp} and \eqref{eq: Eexp} into Eq. \eqref{eq: Es} and taking $\bs \varphi$ by the curl-free basis functions in Eq. \eqref{eq: C2} lead to the following system in terms of $\bs \xi$
\begin{subequations}\label{eq: nonlinearsystem}
\begin{align}
& 2\lambda^2 \xi_{00}^1+f_{00}^1(\bs \xi)=0,\\
& 2\lambda^2 \xi_{00}^2+f_{00}^2(\bs \xi)=0,\\
& 2\lambda^2 (m^2+n^2)\xi_{mn}+( mf_{mn}^1(\bs \xi)+n f_{mn}^2(\bs \xi) )=0,\;\; (m,n)\in \mathbb{N}^2_{+}
\end{align}
\end{subequations}
which is denoted by $\bs {\mathcal F}(\bs \xi)=\bs 0$. Note that $\{ f_{mn}^i\}$ depend on $\bs \xi$ as $\bs J^{n+1/2}$ is uniquely determined by $\bs E^{n+1/2}$, thus the above system is a nonlinear one.

The numerical solution of such a strongly-coupled nonlinear system has proven to be challenging. Classical Newton-type methods for solving the nonlinear system require the computation of the Jacobian $\delta \bs {\mathcal F}(\bs \xi^{(k)})/\delta \bs \xi$ or Jacobian-vector multiplication $(\delta \bs {\mathcal F}(\bs \xi^{(k)})/\delta \bs \xi)\, \bs w$ for specific given vector $\bs w$, which is cumbersome to compute for the above nonlinear system \eqref{eq: nonlinearsystem}. Thus, we resort to Anderson-acceleration (AA) method \cite{2009Anderson,2022Superlinear,Anderson2018}, which is a derivative-free iteration method with improved convergence compared with the traditional fixed-point iteration method. We refer the readers to the Appendix for a summary of the AA algorithm.

Compared with the traditional Picard fixed-point iteration method, the AA algorithm does not require $\mathcal{T}$ to be a contractive operator. It has an improved convergence rate, thus can be very efficient for solving the nonlinear system \eqref{eq: nonlinearsystem}, when coupled with an effective preconditioner.

\begin{remark}
It is worthwhile to note that we omit the calculation of $\bs{J}^{n+1/2,(k)}$ in each AA iteration. Actually, once $\bs \xi^{(k)}$, i.e. $\bs{E}^{n+1/2,(k)}$ is given, Eqs. \eqref{eq: schemeeq1}-\eqref{eq: schemeeq2} can be reformulated to a nonlinear problem for ${\bs v_p^{n+1/2,(k)}}$ as
\begin{equation}
2(\bs v_p^{n+1/2,(k)} -\bs v_p^n)+\Delta t \int_{\Omega_{\bs{x}}}\bs{E}^{n+1/2,(k)}S\Big(\bs{x}- \bs x_p^n-\frac{\Delta t}{2}\bs v_p^{n+1/2,(k)}   \Big )d\bs{x}=0,
\end{equation}
which can also be solved efficiently by the Anderson-acceleration algorithm. Once $\bs v_p^{n+1/2,(k)}$ is calculated, one can obtain ${\bs x_p^{n+1/2,(k)}}$ by $\bs x_p^{n+1/2,(k)}=\bs x_p^n+({\Delta t}/2) \bs v_p^{n+1/2,(k)}$
and compute $\bs{J}^{n+1/2,(k)}$  readily by Eq. \eqref{eq: sourcenj}.
\end{remark}

\subsection{Asymptotic-preserving preconditioner}
 \label{subsec: AP-preconditioner}

 Though the proposed Anderson-accelerated fully-implicit scheme \eqref{eq: discrete-dimensionless-motion-Ampere} with structure-preserving Fourier discretizations in Propositions \ref{prop: p1}-\ref{prop: p2} guarantees the energy conservation, it suffers from the difficulty of convergence for solving the resulting nonlinear system. When the system approaches the quasi-neutral limit, i.e. $\lambda \rightarrow 0$, nonlinear coupling between the Amp\`ere equation \eqref{eq:pE1} and the particle motion equations \eqref{eq: pnewton1}-\eqref{eq: pnewton2} are gradually magnified, and the nonlinear system becomes notoriously difficult to solve. We propose an effective preconditioner to accelerate the convergence, which can be viewed as a linearized approximation for the time-discretized Amp\`ere equation \eqref{eq: disEform}, for both moderate and small $\lambda$ values.

 We start from the generalized Ohm's law \cite{Somov2007,4ec3d3fc98db4e05b6bed56784f1d2c1,Kandus_2008}
\begin{equation}
    \displaystyle \frac{\partial\bs{J}(\bs x,\bs v,t)}{\partial t}=\nabla\cdot \mathcal{S}(\bs x,t)+n(\bs x,t)\bs{E}(\bs x,t),
    \label{eq: generalized-Ohm-law}
\end{equation}
which is obtained by taking the first-order momentum of the Vlasov equation with respect to the velocity field.  Here, $\mathcal{S}(\bs x,t)$ is the stress tensor defined by
\begin{equation}\label{eq: stress}
\mathcal{S}=\int_{\Omega_{\bs v}} f\bs{v} \otimes \bs vd\bs{v}
\end{equation}
We use the Lie-Trotter operator splitting technique \cite{WANG2022117,Lie-Trotter2021,doi:10.1137/140962644} to split Eq. \eqref{eq: generalized-Ohm-law} into the following two subproblems:
     \begin{subequations}\label{eq: ohmdis}
     \begin{align}
 &    \partial_t \bs J=\nabla \cdot \mathcal{S}, \label{eq: js1}\\
 &  \partial_t \bs J=n\bs E.\label{eq: implJ}
 \end{align}%
\end{subequations}
By the definitions of $\bs J$ and $\mathcal{S}$, using macro particles to approximate $f(\bs x,\bs v,t)$,
and taking integral of Eq. \eqref{eq: js1} over the spatial domain,
we arrive at a sequence of ODEs for the macro particles
 \begin{equation} \label{eq:xpstar}
\displaystyle \frac{d \bs x_p}{dt}=\bs v_p,\quad \displaystyle \frac{d \bs v_p}{dt}=\bs 0,
 \end{equation}
 which can be solved analytically by
 \begin{equation}
 \bs v_p^{*,n+1/2}= \bs v_p^{n}, \quad \bs x_{p}^{*,n+1/2}=\bs x_p^n +\frac{1}{2}\Delta t \bs v_p^n.
 \end{equation}
 Then, by the Lie-Trotter splitting scheme, subproblem \eqref{eq: implJ} is further discretized by
 \begin{equation}
    \displaystyle \frac{\bs{J}^{n+1/2}-\bs{J}^{\ast,n+1/2}}{\Delta t /2}=n^{*,n+1/2}\bs{E}^{n+1/2},
    \label{eq: new-preconditioner-approximation-J}
\end{equation}
where $n^{*,n+1/2}=n(\bs{x}_{p}^{*,n+1/2})$ and $\bs{J}^{\ast,n+1/2}=\bs{J}(\bs{x}_{p}^{*,n+1/2},\bs{v}_{p}^{*,n+1/2}).$  Consequently, $\bs{J}^{n+1/2}$ can be approximated by
\begin{equation}\label{eq: Japprox}
\bs{J}^{n+1/2}=\bs{J}^{\ast,n+1/2}+\frac{\Delta t}{2} n^{*,n+1/2}\bs{E}^{n+1/2}.
\end{equation}
Inserting the above approximation into Eq. \eqref{eq: disEform}, we have
\begin{equation}\label{eq: preconditioner}
\Big( \big(2\lambda^2+({\Delta t^2 }/{2})n^{*,n+1/2} \big) \bs E^{n+1/2}, \bs \varphi \Big)=(2\lambda^2 \bs E^n-\Delta t \bs J^{*,n+1/2}, \bs \varphi ).
\end{equation}
which is a uniform approximation for the discretized Amp\`ere equation \eqref{eq: disEform} for $\lambda$.
We substitute the expansions \eqref{eq: Eexp} of $\bs E^{n+1/2}$ into Eq. \eqref{eq: preconditioner} and take $\bs \varphi$ by the curl-free basis functions in Eq. \eqref{eq: C2}, then a linear system for the expansion coefficients $\bs \xi$ is given as 
\begin{equation}
\mathcal{M}\bs \xi=\bs b.
\end{equation}
Consequently, $\mathcal{M}$ can be used as a preconditioner for solving the nonlinear system \eqref{eq: nonlinearsystem}, for both moderate and small $\lambda$ values, thus making for an asymptotic-preserving preconditioner.

\begin{remark}\label{rem: Dprecond}
The preconditioner $\mathcal{M}$ itself can be inverted easily by a few iterations using GMRES preconditioned by diagonal matrix $\mathcal{D}$, which is obtained by replacing $\bs E^{n+1/2}$ and  $\bs \varphi $ in  $(2\lambda^2  \bs E^{n+1/2}, \bs \varphi )$ by the curl-free basis functions in Eq. \eqref{eq: C2}. It is worthwhile to note that the computational cost of employing the preconditioner is negligible, as it does not involve the task of updating particles, which consumes the majority portion of the computational time.
\end{remark}

\section{Strang-splitting energy-conserving scheme}
\label{sec: strang-splitting}

In this section, we introduce another implicit energy-conserving scheme, with the idea stemming from the Strang operator-splitting method, for comparison with the proposed energy-conserving asymptotic-preserving scheme. Similar techniques have been explored for Vlasov-Maxwell system \cite{2022Zhong,doi:10.1137/120871791}. Nevertheless, for the electrostatic Vlasov system, we emphasize the necessity of using curl-free spatial discretization to achieve energy conservation.

The particle-Amp\'ere system \eqref{eq: discrete-dimensionless-motion-Ampere} is firstly decomposed into the following two subproblems:
\begin{equation}\label{eq: subproblem1}
\begin{cases}
& \text{find}\; \{\bs x_p, \;\bs v_p \}_{p=1}^{N}\in \mathbb{R}^d\; \text{and}\; \bs E\in \bs H^{\rm curl,0},\; \text{ such that},\\
 &\dfrac{d\bs{x}_{p}}{dt}=\bs{v}_{p}, \\[6pt]
 &\dfrac{d\bs{v}_{p}}{dt}=\bs{0}, \\[6pt]
 &\lambda^{2}(\partial_{t}\bs{E},\bs{\varphi})=0, \;\; {\forall \bs \varphi \in \bs H^{\rm curl,0}},
\end{cases}
\end{equation}
and
\begin{equation}\label{eq: subproblem2}
\begin{cases}
& \text{find}\; \{\bs x_p, \;\bs v_p \}_{p=1}^{N}\in \mathbb{R}^d\; \text{and}\; \bs E\in \bs H^{\rm curl,0},\; \text{ such that},\\
    &\dfrac{d\bs{x}_{p}}{dt}=\bs{0},\\[6pt]
    &\dfrac{d\bs{v}_{p}}{dt}=-\int_{\Omega_{\bs{x}}} \bs{E}(\bs{x})S(\bs{x}-\bs{x}_{p})d\bs{x}, \\[6pt]
    &\lambda^{2}( \partial_{t}\bs{E}, \bs \varphi)+(\bs{J}, \bs \varphi)=\bs 0,\;\; {\forall \bs \varphi \in \bs H^{\rm curl,0}}.
\end{cases}
\end{equation}
It is direct to prove that each of the subproblems satisfies the energy conservation law
\begin{equation}
    \displaystyle \frac{d}{dt}\left[\frac{\lambda^{2}}{2}\int|\bs{E}(\bs{x})|^{2}d\bs{x}+\frac{w_{p}}{2}\sum\limits_{p=1}\limits^{N_{s}}\bs{v}_{p}^{2}\right]=0.
    \label{eq: strang-conservation}
\end{equation}
In light of the Strang operator-splitting framework, the above two subproblems can be discretized as follows:
\begin{equation}\label{eq: subproblemdis1}
\text{Step 1:\ } \begin{cases}
& \text{find}\; \{\bs x_p^{*}, \;\bs v_p^{*} \}_{p=1}^{N}\in \mathbb{R}^d\; \text{and}\; \bs E^{*}\in \mathbb{C}_N^d,\; \text{ such that},\\[3pt]
 &\dfrac{\bs{x}_{p}^{*}-\bs x_p^n}{\Delta t /2}= \dfrac{ \bs{v}_{p}^*+\bs{v}_{p}^n}{2} , \\[8pt]
 &\dfrac{\bs{v}_{p}^{*}-\bs v_p^n}{\Delta t /2}=\bs{0}, \\[6pt]
 &\dfrac{\lambda^{2}}{\Delta t /2}(\bs{E}^{*}-\bs E^n,\bs{\varphi})=0, \;\; {\forall \bs \varphi \in \mathbb{C}_N^d},
\end{cases}
\end{equation}
\begin{equation}\label{eq: subproblemdis2}
\text{Step 2:\ }\begin{cases}
& \text{find}\; \{\bs x_p^{**}, \;\bs v_p^{**} \}_{p=1}^{N}\in \mathbb{R}^d\; \text{and}\; \bs E^{**}\in \mathbb{C}_N^d,\; \text{ such that},\\
    &\dfrac{\bs{x}_{p}^{**}-\bs x_p^*}{\Delta t}=\bs{0},\\[6pt]
    &\dfrac{\bs{v}_{p}^{**}-\bs v_p^*}{\Delta t}=- \displaystyle\int \dfrac{\bs E^{**}+\bs E^*}{2}S\left(\bs{x}-\dfrac{\bs x_p^{**}+\bs x_p^{*}}{2}\right)d\bs{x}, \\[6pt]
    &\dfrac{\lambda^{2}}{\Delta t}(\bs E^{**}-\bs E^*, \bs \varphi)+\big(\bs J^{**}, \bs \varphi \big)=\bs 0,\;\; {\forall \bs \varphi \in \mathbb{C}_N^d},
\end{cases}
\end{equation}
with $\bs J^{**}=-\sum\limits_{p=1}\limits^{N_{s}}w_{p}S\left(\bs{x}-\dfrac{\bs x_p^{**}+\bs x_p^{*}}{2}\right)\dfrac{\bs v_p^{**}+\bs v_p^{*}}{2}$ and
\begin{equation}\label{eq: subproblemdis1prime}
\text{Step 3:\ }\begin{cases}
& \text{find}\; \{\bs x_p^{n+1}, \;\bs v_p^{n+1} \}_{p=1}^{N}\in \mathbb{R}^d\; \text{and}\; \bs E^{n+1}\in \mathbb{C}_N^d,\; \text{ such that},\\[3pt]
 &\dfrac{\bs{x}_{p}^{n+1}-\bs x_p^{**}}{\Delta t /2}= \dfrac{ \bs{v}_{p}^{n+1}+\bs{v}_{p}^{**}}{2} , \\[8pt]
 &\dfrac{\bs{v}_{p}^{n+1}-\bs v_p^{**}}{\Delta t /2}=\bs{0}, \\[6pt]
 &\dfrac{\lambda^{2}}{\Delta t /2}(\bs{E}^{n+1}-\bs E^{**},\bs{\varphi})=0, \;\; {\forall \bs \varphi \in \mathbb{C}_N^d},
\end{cases}
\end{equation}

Following the same procedure as in Theorem \ref{thm: thmec}, one directly obtains the discrete energy conservation law
\begin{equation}
\begin{aligned}
& \frac{1}{2}\lambda^{2}\int_{\Omega_{\bs{x}}}|\bs{E}^{n+1}|^{2}d\bs{x}+\sum\limits_{p=1}\limits^{N_{s}}\frac{1}{2}w_{p}|\bs{v}_{p}^{n+1}|^{2}= \frac{1}{2}\lambda^{2}\int_{\Omega_{\bs{x}}}|\bs{E}^{**}|^{2}d\bs{x}+\sum\limits_{p=1}\limits^{N_{s}}\frac{1}{2}w_{p}|\bs{v}_{p}^{**}|^{2} \\
&= \frac{1}{2}\lambda^{2}\int_{\Omega_{\bs{x}}}|\bs{E}^* |^{2}d\bs{x}+\sum\limits_{p=1}\limits^{N_{s}}\frac{1}{2}w_{p}|\bs{v}_{p}^*|^{2}   =    \frac{1}{2}\lambda^{2}\int_{\Omega_{\bs{x}}}|\bs{E}^n|^{2}d\bs{x}+\sum\limits_{p=1}\limits^{N_{s}}\frac{1}{2}w_{p}|\bs{v}_{p}^n|^{2}.
\end{aligned}
\label{eq: strang-energy-stable}
\end{equation}
The second and third therms of Eq. \eqref{eq: strang-energy-stable} have such implication: the total energy remains stable in each substep, given the fact that $\frac{1}{2}\lambda^{2}\int_{\Omega_{\bs{x}}}|\bs{E}^{**}|^{2}d\bs{x}+\sum\limits_{p=1}\limits^{N_{s}}\frac{1}{2}w_{p}|\bs{v}_{p}^{**}|^{2}$ is the energy of Step 2 and $\frac{1}{2}\lambda^{2}\int_{\Omega_{\bs{x}}}|\bs{E}^* |^{2}d\bs{x}+\sum\limits_{p=1}\limits^{N_{s}}\frac{1}{2}w_{p}|\bs{v}_{p}^*|^{2} $ the energy of Step 3.

Compared with the fully-implicit scheme proposed in the previous section, the computation of particles and the electric field in the Strang splitting scheme can be decoupled and the first two steps can be combined to arrive at the following efficient solution algorithm:
\begin{itemize}
\item[] Step 1: find $ \bs {\tilde E}=(\bs E^{**}+\bs E^n)/2  \in \mathbb{C}_N^d$ such that
\begin{equation}\label{eq: spE}
  \displaystyle {2\lambda^{2}}(\bs {\tilde E}-\bs E^*, \bs \varphi)+{\Delta t}\big(\bs {\tilde J}  , \bs \varphi \big)=\bs 0,\;\; {\forall \bs \varphi \in \mathbb{C}_N^d},
\end{equation}
with
\begin{equation}
\bs {\tilde J} = -\sum\limits_{p=1}\limits^{N_{s}}w_{p}S\left(\bs{x}-\bs x_p^n-\frac{\Delta t}{2} \bs v_p^n \right) \left[ \bs v_p^n-\frac{\Delta t}{2} \int \bs {\tilde E}(\bs{x})S\left(\bs x-\bs x_p^n-\frac{\Delta t}{2}\bs v_p^n\right)d\bs x \right]
\end{equation}
and compute
\begin{equation}
\bs v_p^{**}= \bs v_p^n-\frac{\Delta t}{2} \int \bs {\tilde E}(\bs{x})S\left(\bs x-\bs x_p^n-\frac{\Delta t}{2}\bs v_p^n\right)d\bs x,\quad \bs x_p^{**}=\bs x_p^n+\frac{\Delta t}{2}\bs v_p^n.
\end{equation}

\item[] Step 2: Compute
\begin{equation}
\bs v_p^{n+1}=\bs v_p^{**},\quad \bs x_p^{n+1} =\bs x_p^{**}+\frac{\Delta t}{2}\bs v_p^{**},\quad \bs E^{n+1}=2\bs {\tilde  E}-\bs E^n.
\end{equation}
\end{itemize}
Note that the nonlinear system in Eq. \eqref{eq: spE} can be solved efficiently by the Anderson acceleration algorithm preconditioned with the diagonal matrix $\mathcal{D}$ defined in Remark \ref{rem: Dprecond}.

\section{Numerical results}
\label{sec: numerical}

We perform numerical results to validate the performance of the fully-implicit energy-conserving  scheme with the asymptotic-preserving preconditioner (dubbed as ``AP-EC" scheme)
proposed in Section \ref{sec: fully-implicit}, and the Strang operator-splitting method (dubbed as ``SS-EC" scheme) presented in Section \ref{sec: strang-splitting}.
We also show the results of the classical leapfrog scheme for comparisons. All simulations are carried out in phase space $\Omega_{\bs{x}}\times\Omega_{\bs{v}}$ with two dimensions in space and two dimensions in velocity. The computation domain is $\Omega_{\bs x}=[0,2\pi/k_{1}]\times[0,2\pi/k_{2}]$, where $\bs{k}=(k_{1},k_{2})^T$ are given constants. Moreover, let $L_{x}$ and $L_{y}$ be the length of the space domain. We consider the one-species system. The computational domain is uniformly discretized into $32\times32$ cells.
The benchmark problems include Landau damping, two-stream instability, and bump-on-tail instability.
In each problem, $10^5$ macro particles, which are subject to the corresponding initial distributions, are used.

\subsection{Landau damping}
Landau damping has been widely investigated in plasma physics \cite{1965445,landau2011,Stix1962}. In the 2D Landau damping problem, a small cosine perturbation with an amplitude $\alpha$ is exerted on uniformly distributed particles,
and the initial velocity of the particles obeys a Maxwellian distribution (we take the thermal velocity to be 1):
\begin{equation}
    \displaystyle f_{0}(\bs{x},\bs{v})=\frac{1}{2\pi L_{x}L_{y}}\left(1+\alpha\cos(\bs{k}\cdot\bs{x})\right)e^{-\frac{\bs{v}^{2}}{2}}.
\end{equation}
We take $\alpha=0.1$ and $\bs{k}=(0.3,0.3)^{T}$ in our calculations.

\begin{figure}[!htb]
    \centering
    \includegraphics[width=0.45\textwidth]{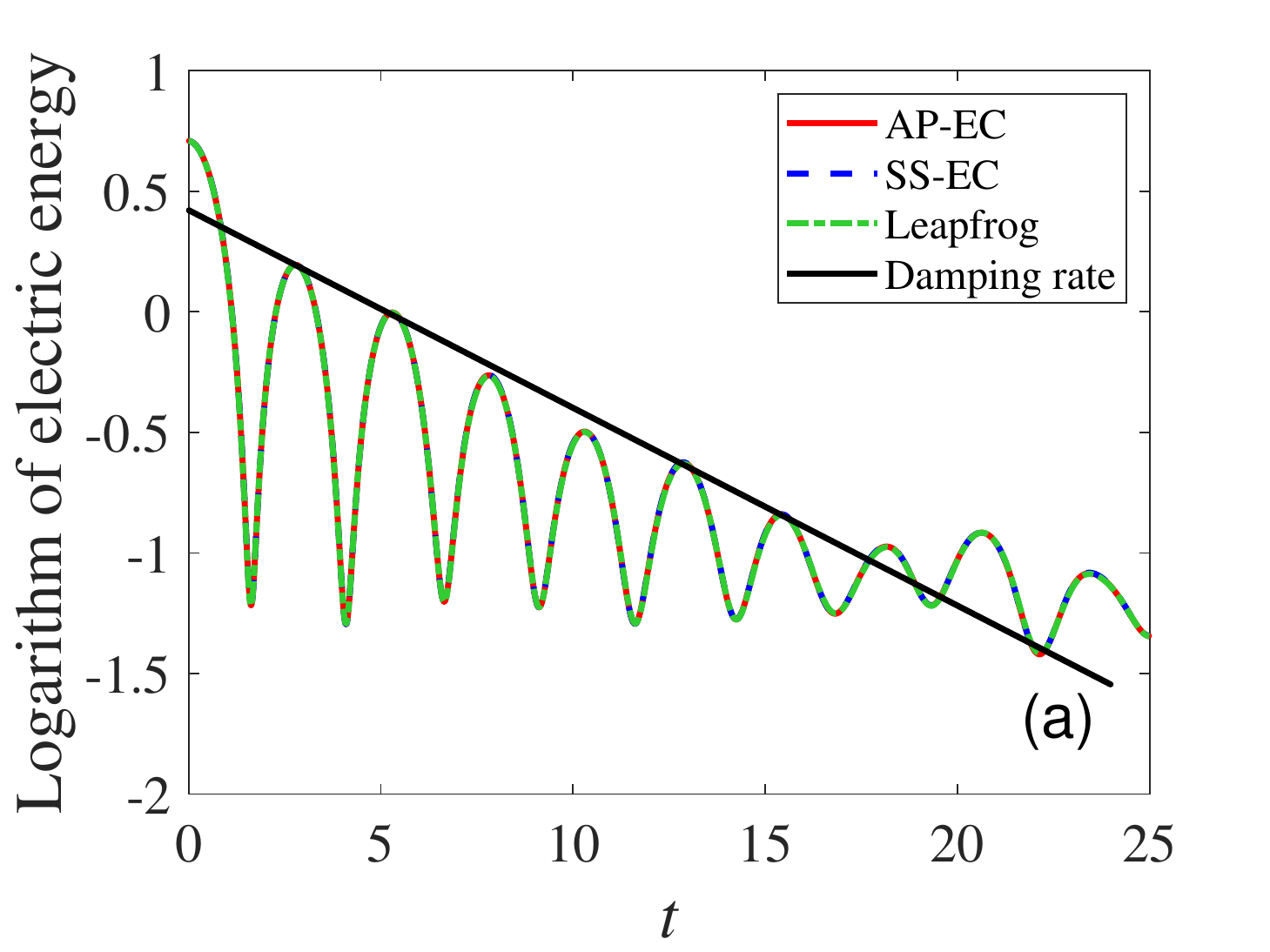}
    \includegraphics[width=0.45\textwidth]{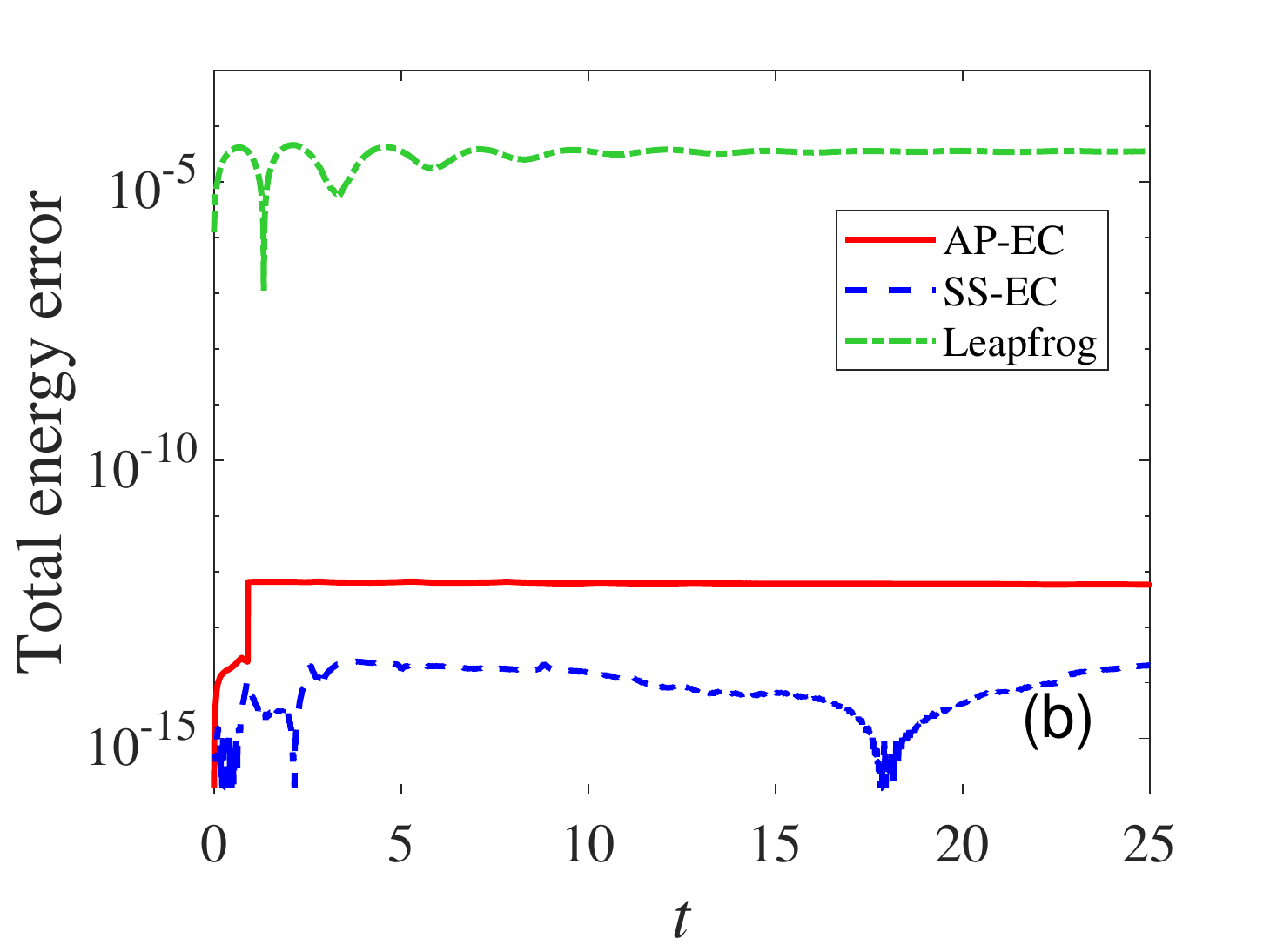}
    \caption{2D Landau damping with $\lambda=1$ and $\Delta t=0.01$: (a) Electric energy in logarithmic scale; and (b) Relative error of total energy.}
    \label{fig: landau-conservation-test}
\end{figure}

We first calculate the results with $\lambda=1$ and time step $\Delta t=0.01$. The tolerances of AA iterations in nonlinear field equations
and the particle pusher are set to be $10^{-11}$ and $10^{-10}$, respectively.
Fig. \ref{fig: landau-conservation-test} presents the electric energy and the relative error of the total energy for the three methods:
the AP-EC, the SS-EC and the classical leapfrog schemes. The total energy of the system is 445.5521. One can observe that all these three methods
remain stable even after a long simulation, and predict the Landau damping well for $\sim 10$ periods. The AP-EC and SS-EC schemes have
a relative error of the total energy less than $10^{-12}$, whereas that of the classical leapfrog scheme
is at the level of $10^{-5}$. These results demonstrate that both the AP-EC and SS-EC are energy-conserving schemes.

In the following calculations, we set the AA-iteration tolerances for the field equations and the particle pusher to be $10^{-6}$ and $10^{-9}$, respectively.
One can observe that the relaxation of the error tolerances will not affect energy conservation considerably, and the relative error of the total energy
in the implicit schemes remains at the level of $10^{-8}\sim10^{-9}$. In Fig. \ref{fig: l1-dt001-k03-curlE-gausslaw},
one displays the residuals of the curl-free constraint $\nabla\times\bs{E}=\bs{0}$ and
the Gauss law $\lambda^{2}\nabla\cdot\bs{E}=1-n$, where the maximum value of the residual absolute is measured.
These results demonstrate that the discrete curl-free condition is strictly satisfied for the AP-EC and the SS-EC schemes.
However, the Gauss law is not exactly preserved in two implicit schemes. The residual of the AP-EC is at the level
of $10^{-6}$, much better than that of the SS-EC, which is of $\mathcal{O}(10^{-4})$.

\begin{figure}[!htb]
    \centering
    \includegraphics[width=0.45\textwidth]{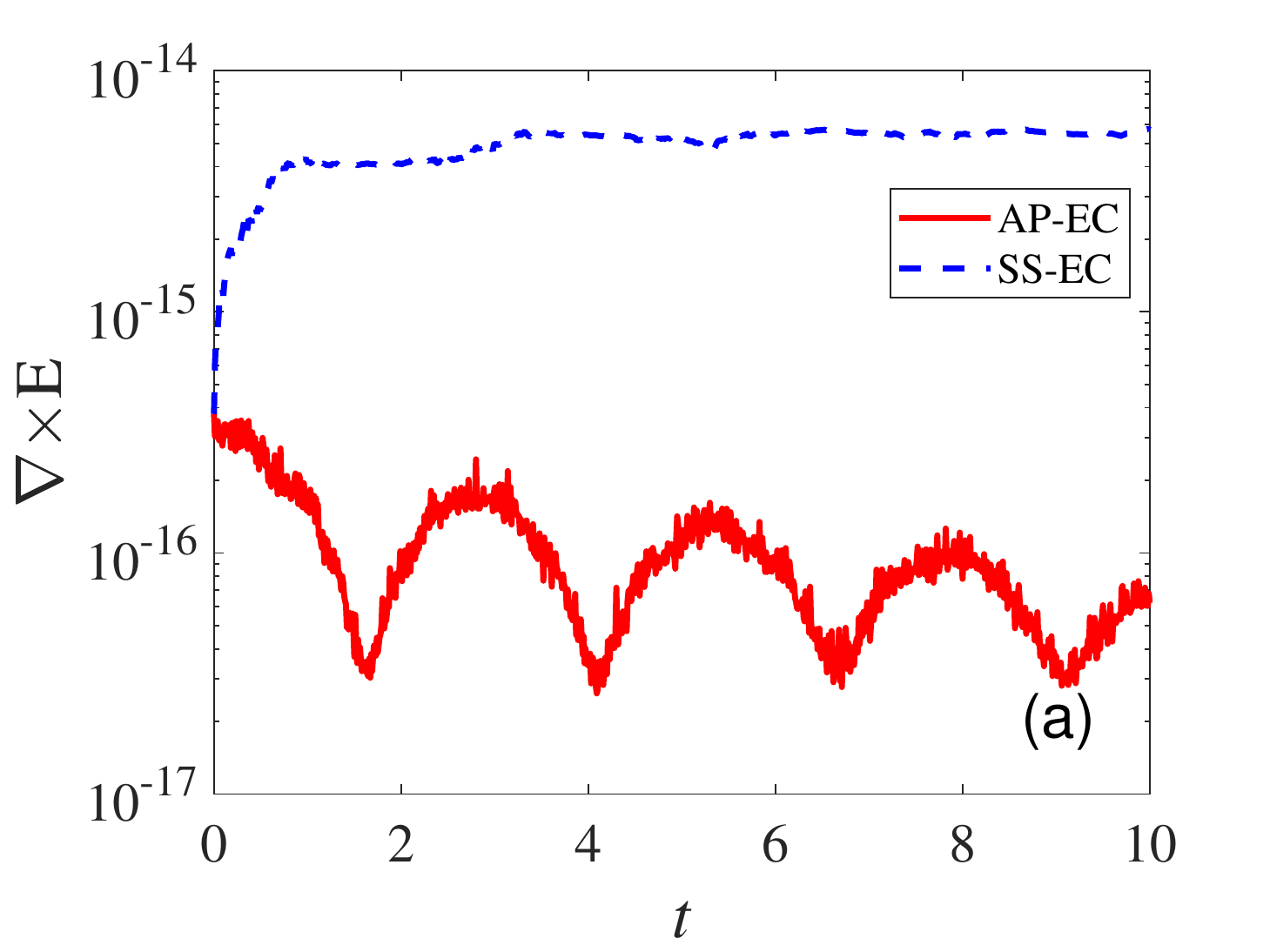}
    \includegraphics[width=0.45\textwidth]{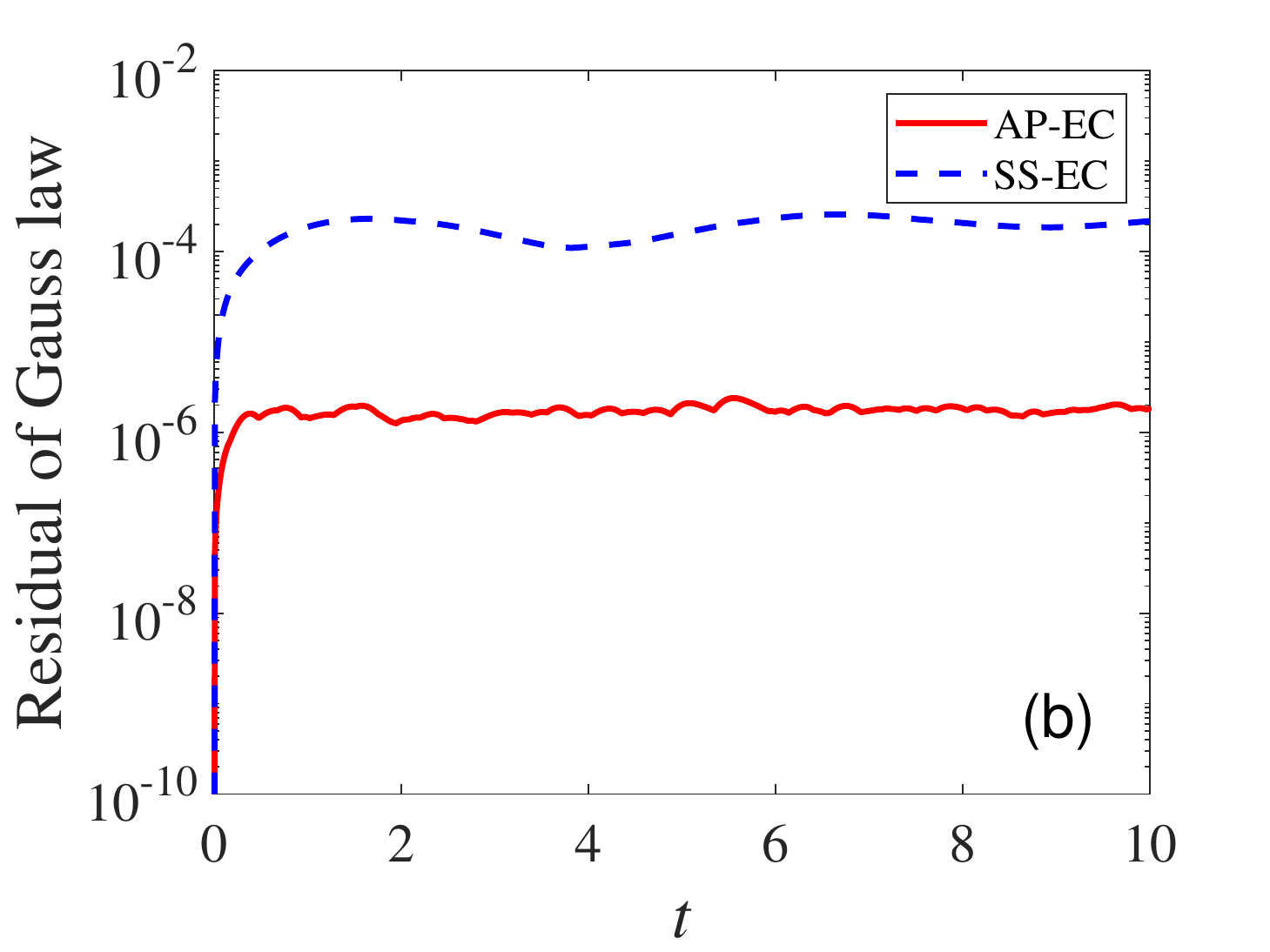}
    \caption{The preservation of curl-free electric field and residual of Gauss law for Landau damping with $\lambda=1$ and $\Delta t=0.01$.
    (a) Maximum value of $|\nabla\times\bs{E}|$ on grid points; (b) Residual of Gauss law. }
    \label{fig: l1-dt001-k03-curlE-gausslaw}
\end{figure}

\begin{figure}[!htb]
    \centering
    \includegraphics[width=0.45\textwidth]{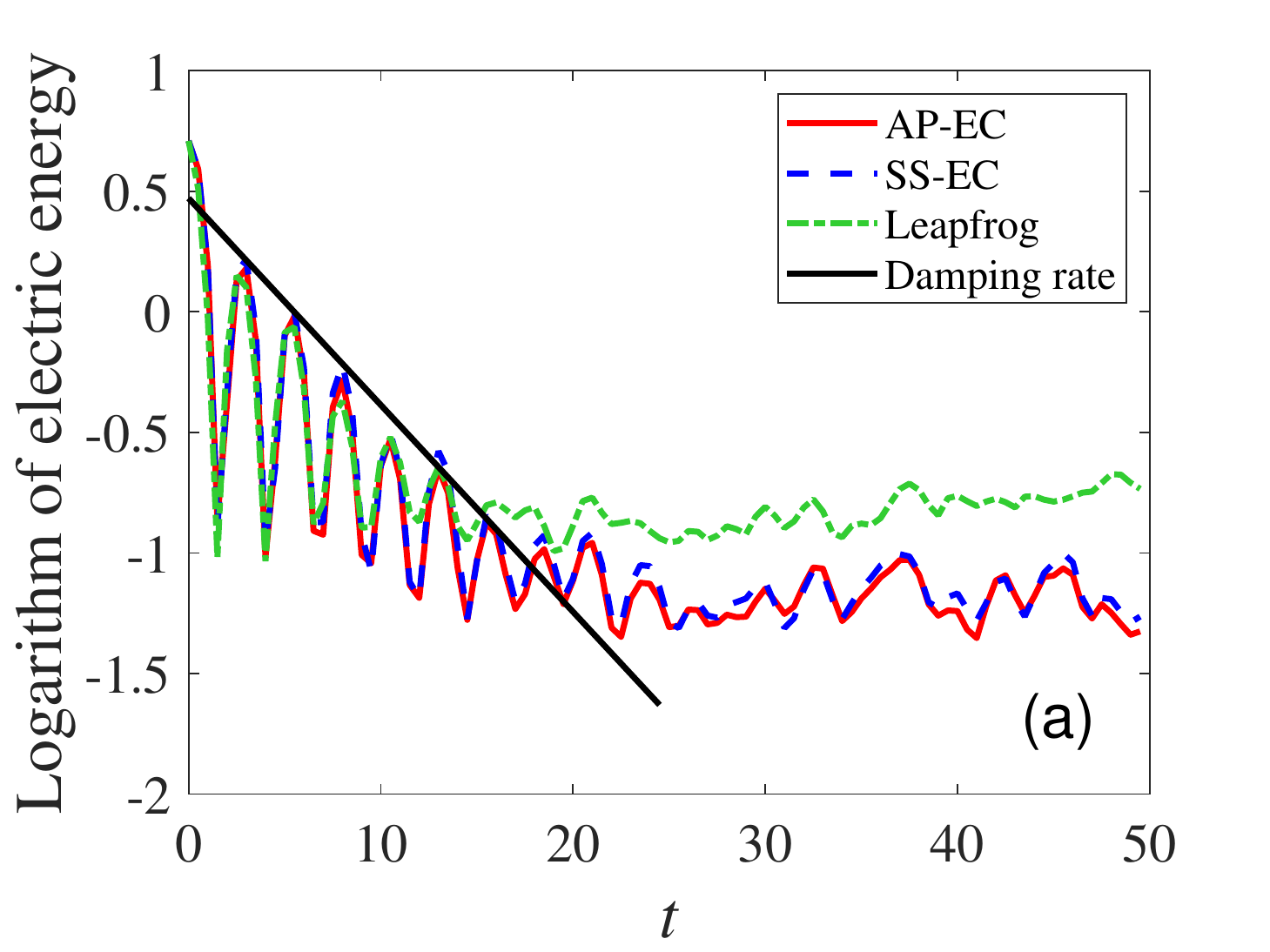}\label{fig: landau-l1-dt05-k03-WE}
    \includegraphics[width=0.45\textwidth]{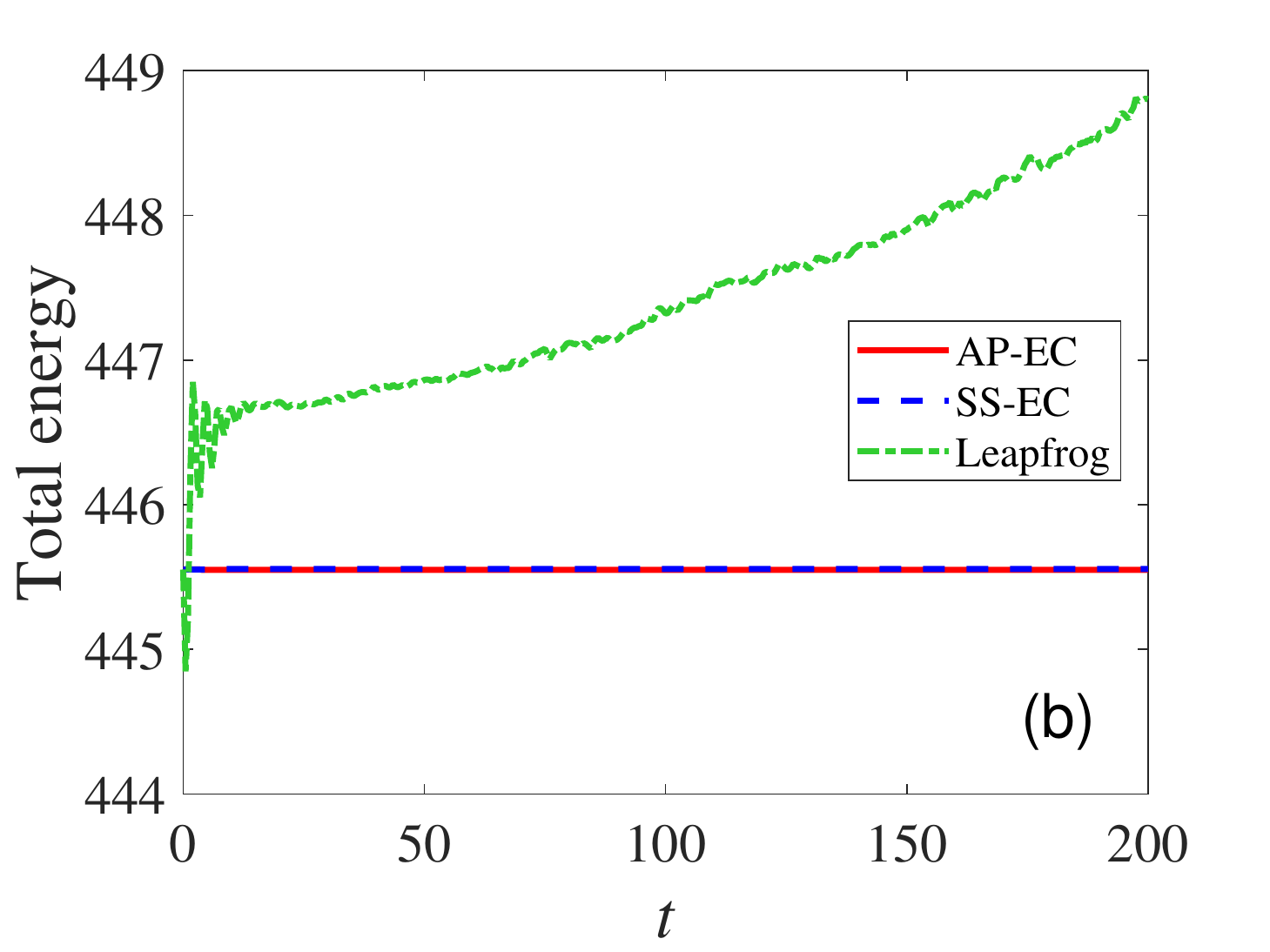}\label{fig: landau-l1-dt05-k03-TE}
    \caption{2D Landau damping with $\lambda=1$ and a large time step $\Delta t=0.5$.
    (a) Electric energy in logarithmic scale; (b) Total energy.}
    \label{fig: landau-l1-dt05-k03-WE-TE}
\end{figure}

We now conduct a long-duration simulation with large time step as $\Delta t=0.5$ with the same $\lambda=1$,
and the results are presented in Fig. \ref{fig: landau-l1-dt05-k03-WE-TE}.
Panel (a) shows that the classical leapfrog scheme cannot withstand large-time-step simulation, which deviates the damping rate after a few periods, while the AP-EC and SS-EC maintain
excellent performance. Additional calculations of the leapfrog scheme with $\Delta t=0.01$ (not shown in the figure) agree well with the curves of the AP-EC and SS-EC, demonstrating the stability of the
implicit schemes. Furthermore, panel (b) illustrates that the AP-EC and SS-EC schemes are energy-conserving for large time steps, but the deviation of the leapfrog scheme from the exact energy increases with the time.

\begin{figure}[!htb]
    \centering
     \includegraphics[width=0.45\textwidth]{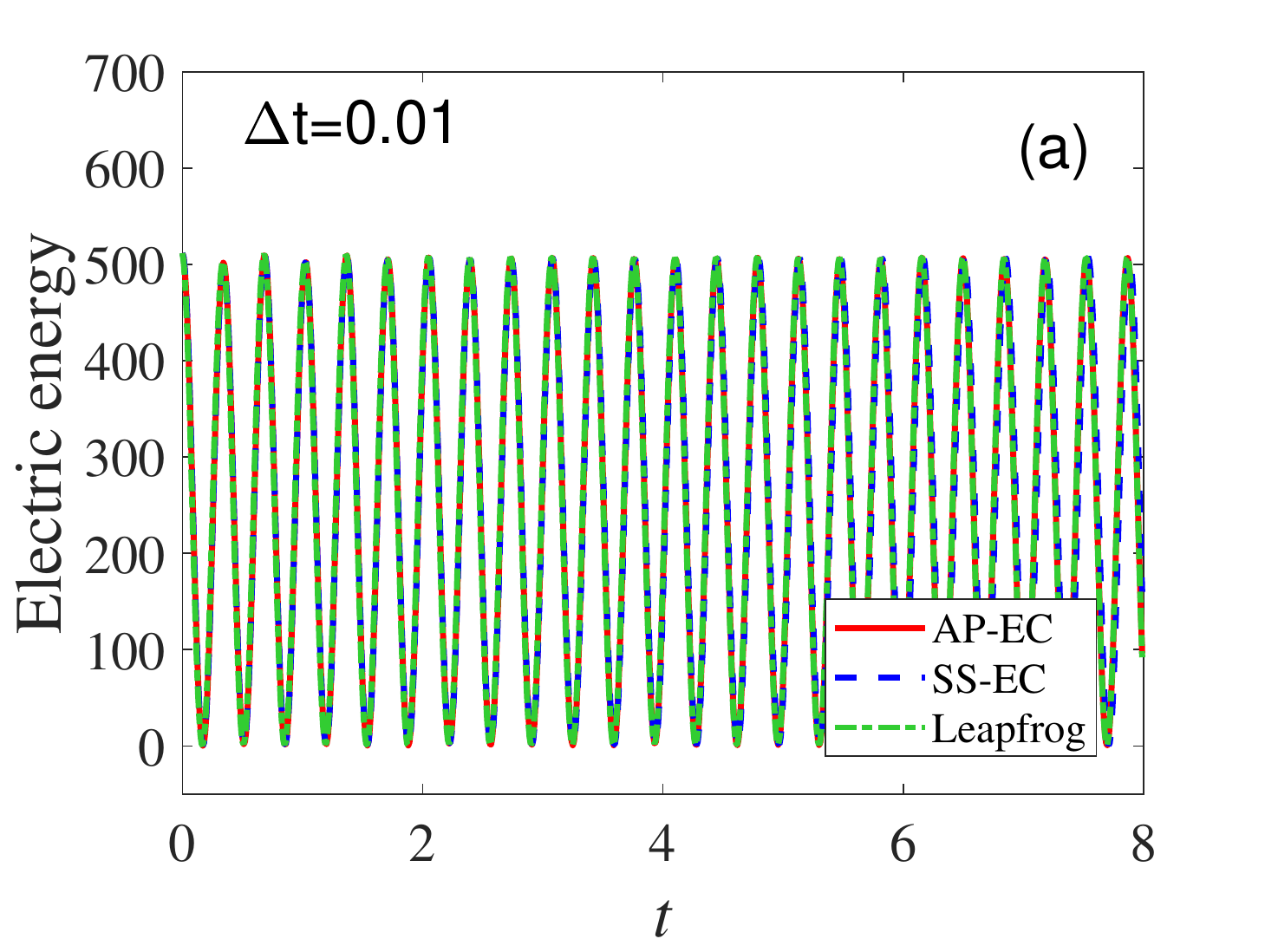}
    \includegraphics[width=0.45\textwidth]{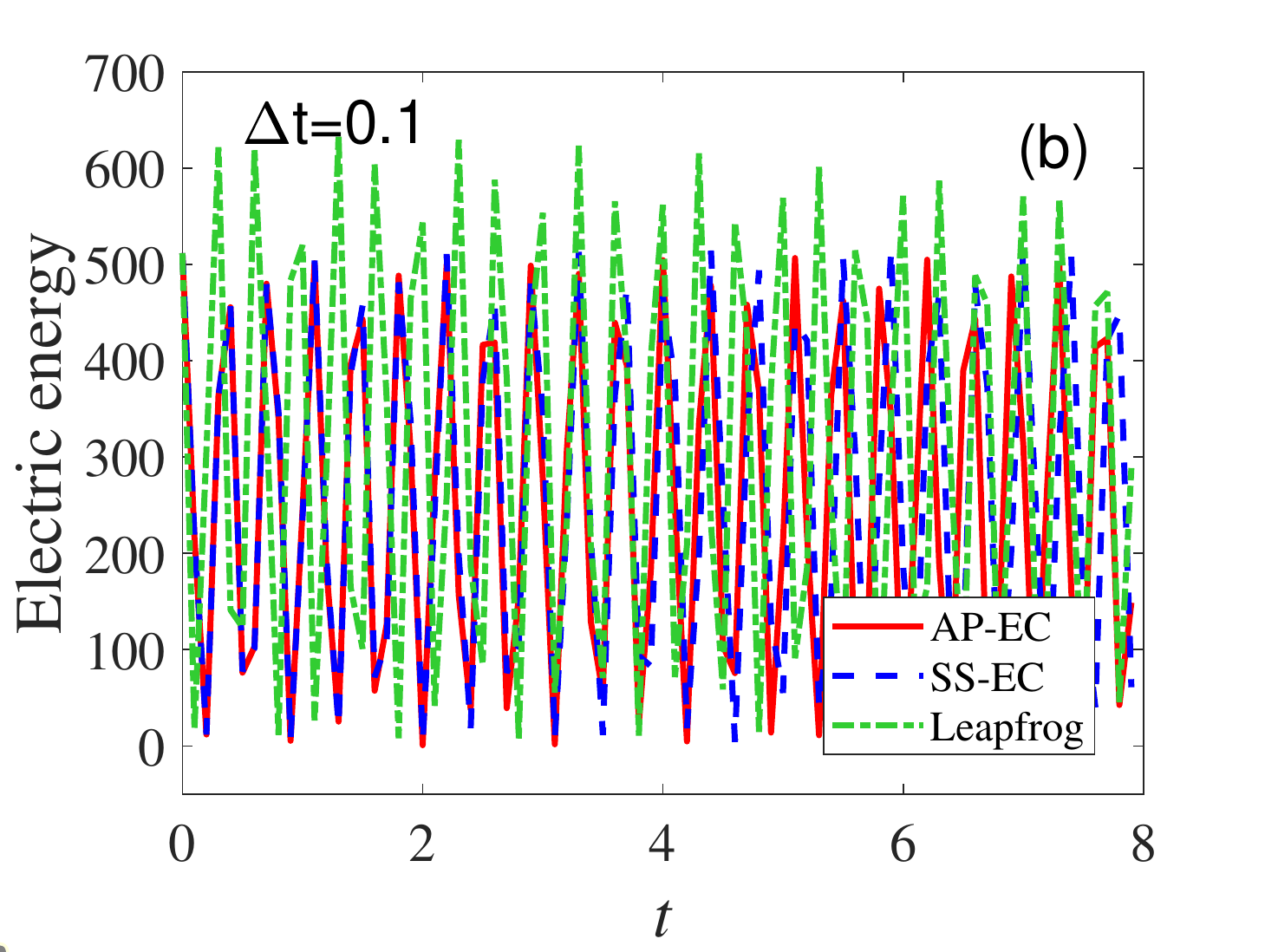}

    \includegraphics[width=0.45\textwidth]{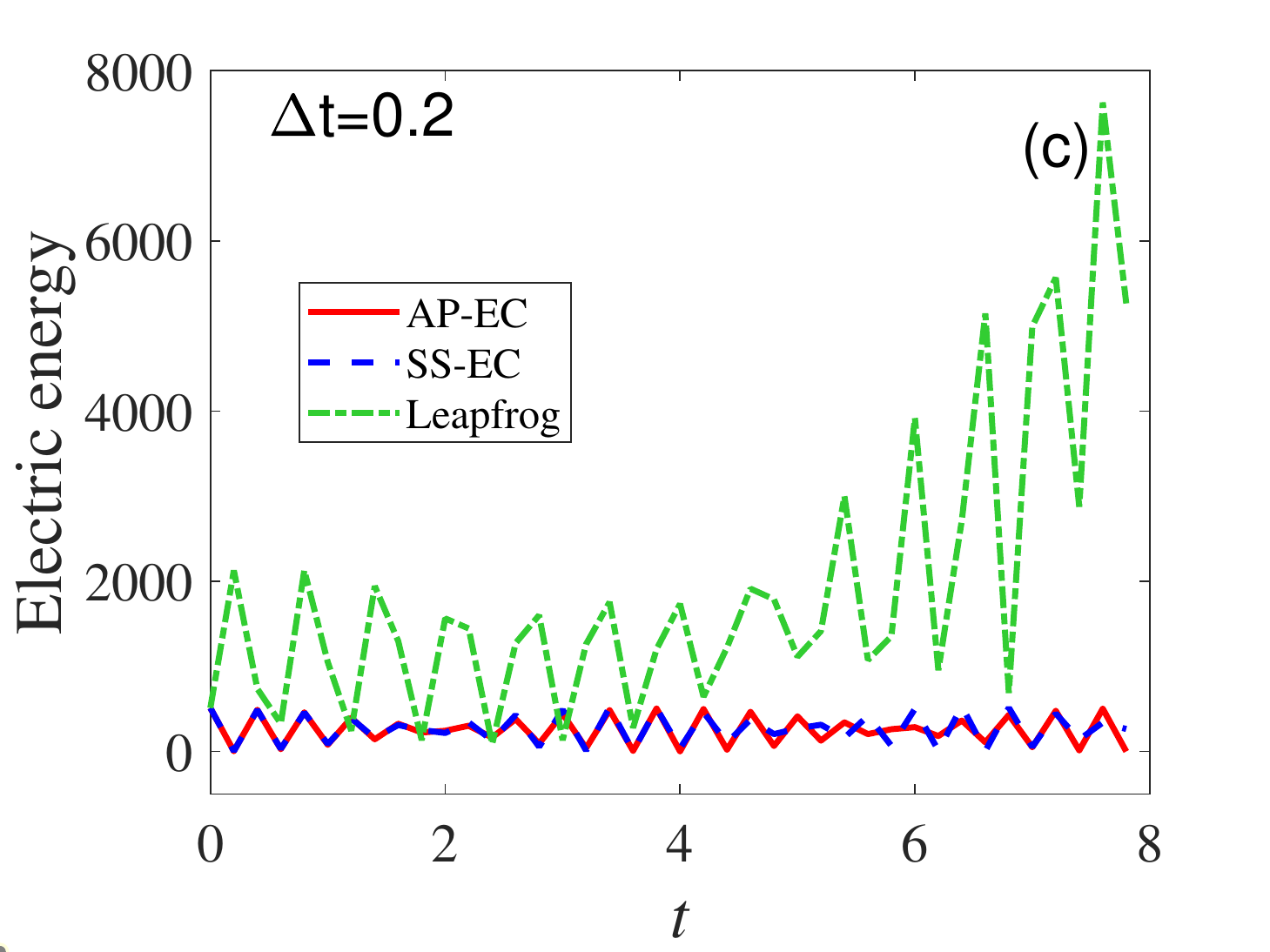}
    \includegraphics[width=0.45\textwidth]{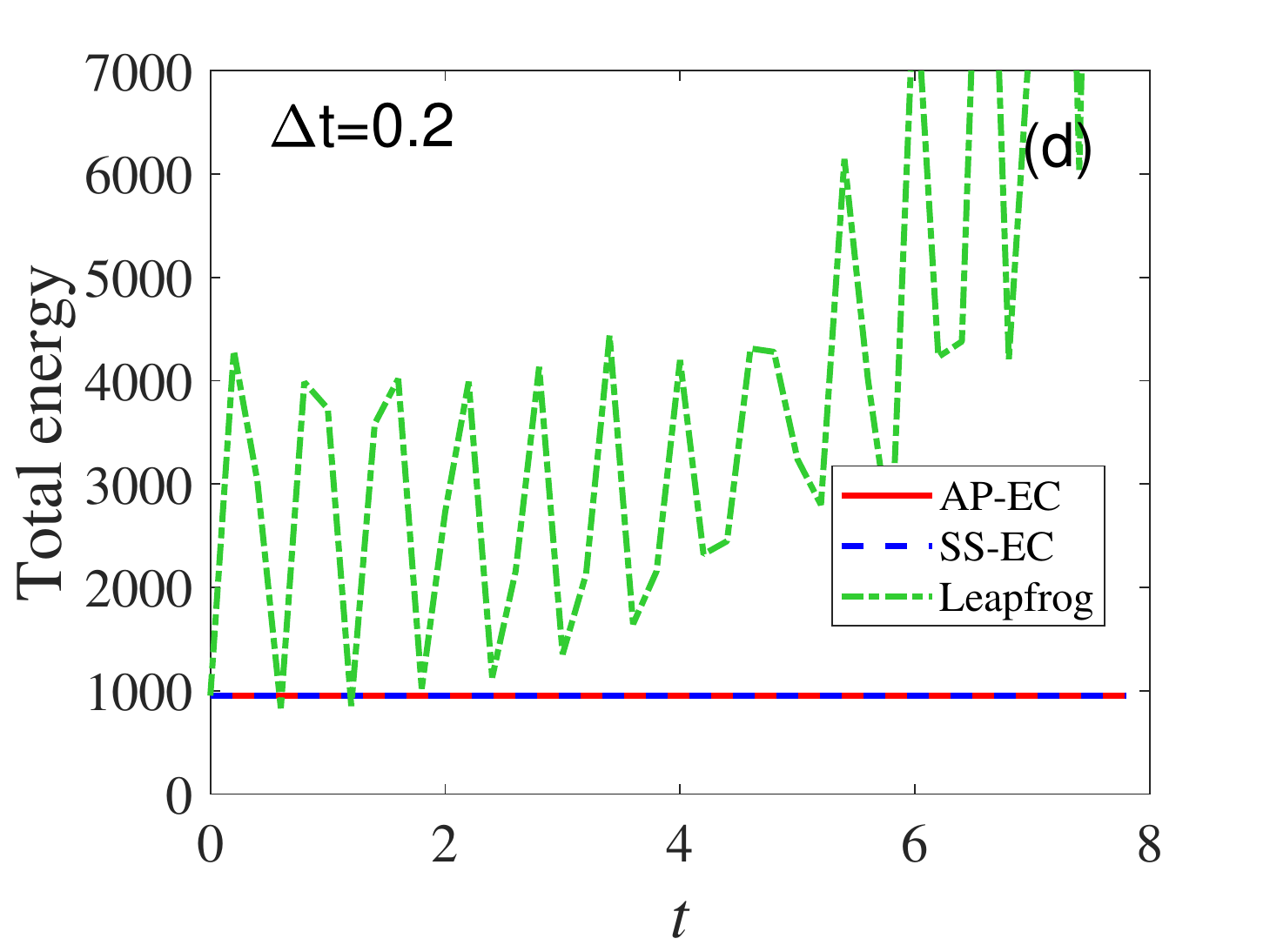}
    \caption{2D Landau damping with dimensionless Debye length $\lambda=0.1$. (a)~Electric energy with $\Delta t=0.01$; (b)~Electric energy with $\Delta t=0.1$; (c)~Electric energy with $\Delta t=0.2$; and (d)~Total energy with $\Delta t=0.2$.}
    \label{fig: landau-l01-dt01-dt02}
\end{figure}

For the performance of our schemes in different physical scales, simulations for $\lambda<1$ are also conducted. Fig. \ref{fig: landau-l01-dt01-dt02} displays the results of cases close to the quasi-neutral limit with dimensionless Debye length $\lambda=0.1$ and with three time steps $\Delta t=0.01, 0.1$ and  $0.2$. For the case of $\Delta t=0.01$ shown in panel (a), curves calculated by the three methods are almost overlapping, and there are three damping periods in each unit of time.  For larger time steps, however, the classical leapfrog scheme has already presented substantial phase differences with $\Delta t=0.1$ and predicts larger electric energy, and it becomes blowing up with $\Delta t=0.2$. We observe that implicit methods are energy-conserving for $\Delta t=0.2$,  and the relative error of total energy for the AP-EC and SS-EC schemes are both at the order of $\mathcal{O}(10^{-8})$.
Additionally, the peaks of the fast Langmuir oscillation are nearly equal from small to large time steps. Regarding troughs, the AP-EC results of $\Delta t=0.1$ are the closest to the results of $\Delta t=0.01$. The SS-EC results are slightly larger than those of $\Delta t=0.01$. The cyclical variation of magnitudes of the AP-EC and SS-EC in Fig. \ref{fig: landau-l01-dt01-dt02} with $\Delta t=0.2$ is attributed to the Shannon sampling theorem and the aliasing error \cite{MDFTWEB07,aliasing,e14112192,Shannon}.

The test case of the 2D Landau damping is ended with Table \ref{tab: average-number-AA-iteration} on the average AA iteration times before convergence to show the performance of the proposed asymptotic-preserving preconditioner. For comparison, we replace the asymptotic-preserving preconditioner in the AP-EC scheme with the preconditioner $\mathcal{D}$ defined in Remark \ref{rem: Dprecond} and term the resultant method as ``$\mathcal{D}$-EC'' scheme. It can be discerned from Table \ref{tab: average-number-AA-iteration} that for the case of $\lambda=1$, the average iteration time of the AP-EC is small, and it only increases slightly from ~5 to 7 when the time step sizes are magnified by 20 times. The $\mathcal{D}$-EC takes more iterations to converge when $\lambda=1$, and the difference of iteration times between the AP-EC and the $\mathcal{D}$-EC grows as $\Delta t$ becomes larger.
While for the case of $\lambda=0.1$, the system approaches the quasi-neutral limit and the resultant nonlinear system becomes extremely difficult to solve (the density is magnified by 100 times compared with the case of $\lambda=1$ and the nonlinear effect dominates). It can be seen that the iteration time of the AP-EC is still tiny for $\Delta t=0.01$ and $\Delta t=0.1$. In contrast, for $\Delta t=0.2$, it takes more iterations before convergence, possibly due to the inaccuracy caused by the particle pusher under large time steps. On the contrary, the $\mathcal{D}$-EC converges for the case of $\Delta t=0.01$ slowly and fails to converge for the rest of the cases.
\begin{table}[H]
    \centering
    \begin{tabular}{c|c|c}
    \toprule[2pt]
    Parameters & AP-EC   & $\mathcal{D}$-EC  \\\midrule[1.5pt]
     $\lambda=1$, $\Delta t=0.01$, tol=1e-6    & 5.130 & 7.010 \\\midrule[1pt]
     $\lambda=1$, $\Delta t=0.1$, tol=1e-6    & 6.560 & 10.67\\\midrule[1pt]
     $\lambda=1$, $\Delta t=0.2$, tol=1e-6    &  7.420 &  15.65 \\\midrule[1pt]
     $\lambda=0.1$, $\Delta t=0.01$, tol=1e-6    &  7.620 &  21.7222
\\\midrule[1pt]
     $\lambda=0.1$, $\Delta t=0.1$, tol=1e-6    &20.32
  & not conv.\\\midrule[1pt]
     $\lambda=0.1$, $\Delta t=0.2$, tol=1e-6  &  157.2 & not conv.\\
     \bottomrule[2pt]
    \end{tabular}
    \caption{Average AA iteration times~(per step) before convergence. The iteration depth is $m=13$, and the maximum number of AA iterations is 200 for each time step.}
    \label{tab: average-number-AA-iteration}
\end{table}

\subsection{Two-stream instability}
Next, we demonstrate the performance of our proposed schemes against the two-stream instability \cite{ghorbanalilu_abdollahzadeh_rahbari_2014,doi:10.1119/1.1407252}. The initial distribution is given by
\begin{equation}
    \displaystyle f_{0}(\bs{x},\bs{v},t)=\frac{1}{2\pi L_{x}L_{y}}\left(1+\alpha\cos(\bs{k}\cdot\bs{x})\right)\left[\frac{1}{2}e^{-\frac{(\bs{v}-\bs{v}_{0})^{2}}{2}}+\frac{1}{2}e^{-\frac{(\bs{v}+\bs{v}_{0})^{2}}{2}}\right],
\end{equation}
where the parameters are set to $\bs{k}=(0.3,0)^T$ and $\bs{v}_{0}=(3,0)^T$. In the forthcoming numerical results, we focus on the energy portion of $\bs{E}_{x}$, defined as $\int (\lambda^2 |\bs{E}_{x}|^2/2)dx$.

\begin{figure}[!htb]
    \centering
    \includegraphics[width=0.45\textwidth]{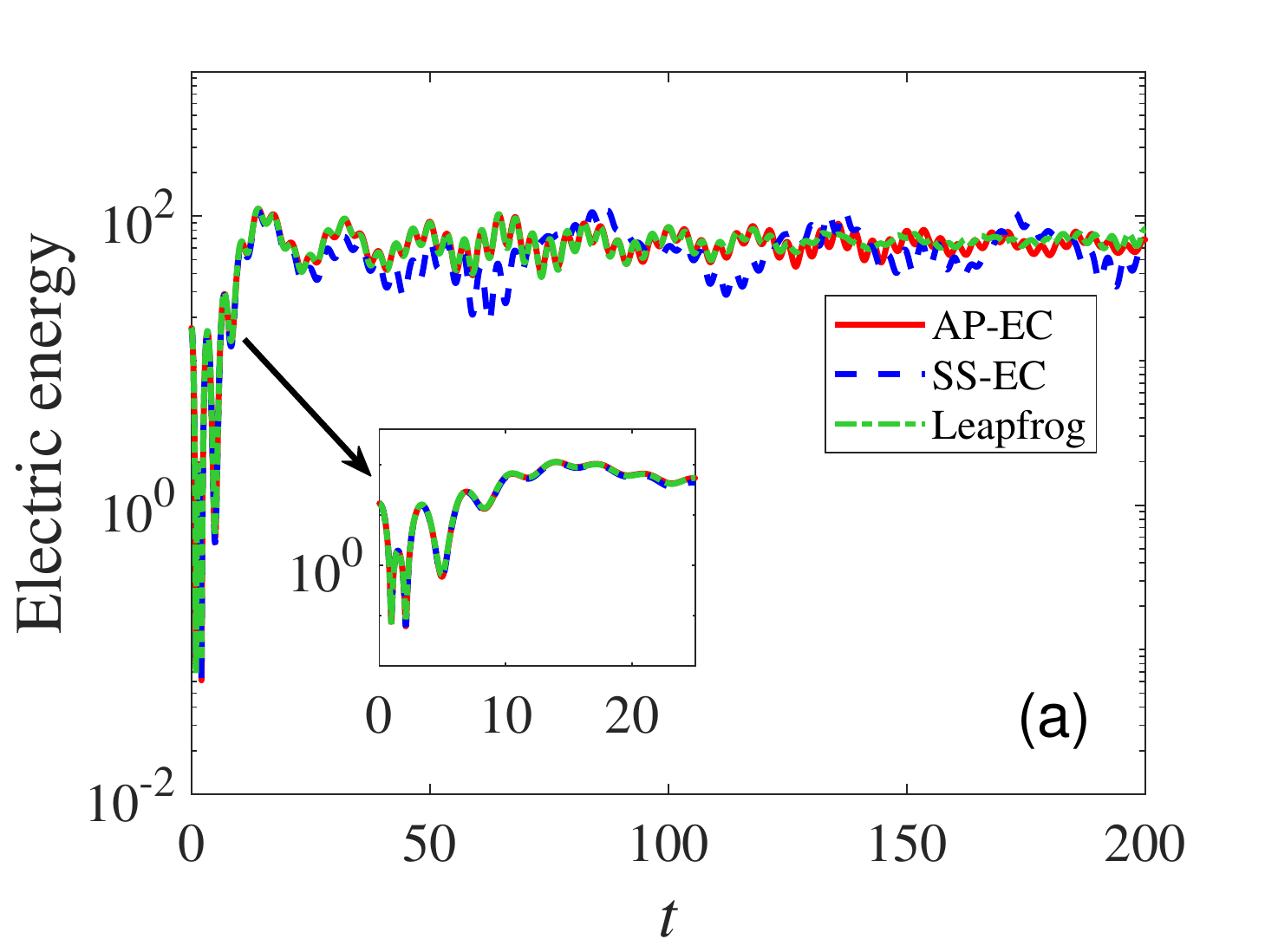}
    \includegraphics[width=0.45\textwidth]{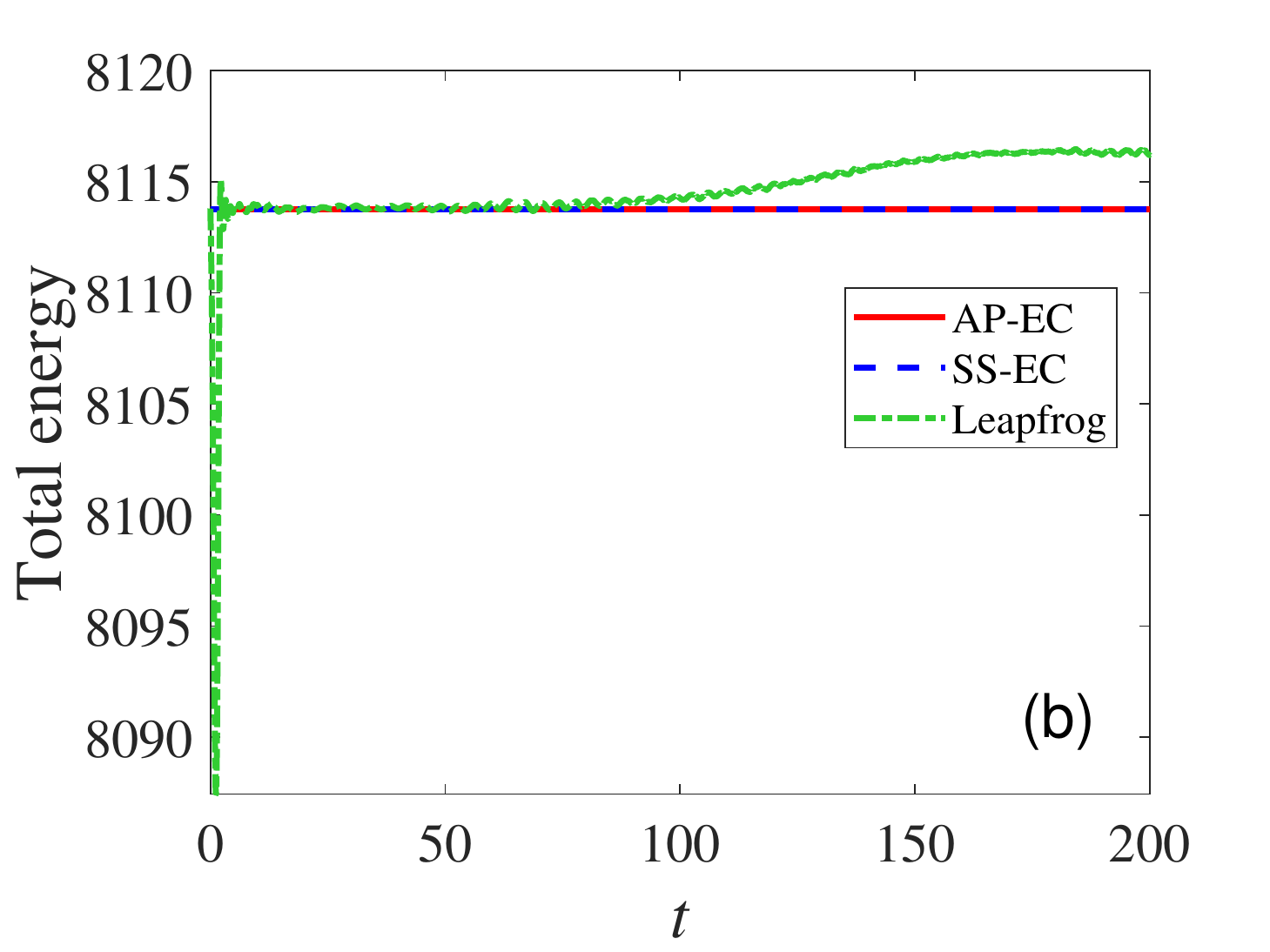}
    \caption{Two-stream instability with $\lambda=1$ and $\Delta t=0.01$: (a)~Electric energy of the $x$-component, (b)~Total energy.}
    \label{fig: twostream-l1-dt001-k03-20000-WE1-TE}
\end{figure}
Fig. \ref{fig: twostream-l1-dt001-k03-20000-WE1-TE} displays the results of the electric energy of the $x$-component and the total energy with $\lambda=1$ and $\Delta t=0.01$. Panel (a) shows that the SS-EC has a prominent error accumulation for long-time simulations, which is in consistent with the performance of Strang splitting scheme reported in other literature \cite{BOTCHEV2009522,Einkemmer2014ConvergenceAO}. Panel (b) shows the unconditional energy conservation of the AP-EC and SS-EC schemes. In contrast, there is a sharp deviation of total energy for the leapfrog method at the very beginning and an increased error in energy conservation with time.

The accumulated error of the SS-EC is also revealed in Fig. \ref{fig: twostream-l1-dt001-k03-10000-x-vx}, which is the phase space distribution of particles when $\lambda=1$ and $\Delta t=0.01$. We can conclude from this figure that with $\Delta t=0.01$, the ``eye" of $x-v_{x}$ by the AP-EC stays centered through the entire simulation, while a tiny movement of the ``eye" in the leapfrog is observed as the simulation proceeds. Significant bias occurs in the SS-EC at some point between $t=60$ and $t=80$. This phenomenon can be explained by the evolution of $\bs{E}_{x}$ in panel (a) of Fig. \ref{fig: twostream-l1-dt001-k03-20000-WE1-TE}. The electric field $\bs{E}_{x}$ calculated by the SS-EC starts to deviate from those of the AP-EC and leapfrog at around $t=50$, leading to differences in electric force. Hence errors exist in the $x-v_{x}$ phase space distribution.

\begin{figure}[!htb]
    \centering
    \includegraphics[width=1.05\textwidth]{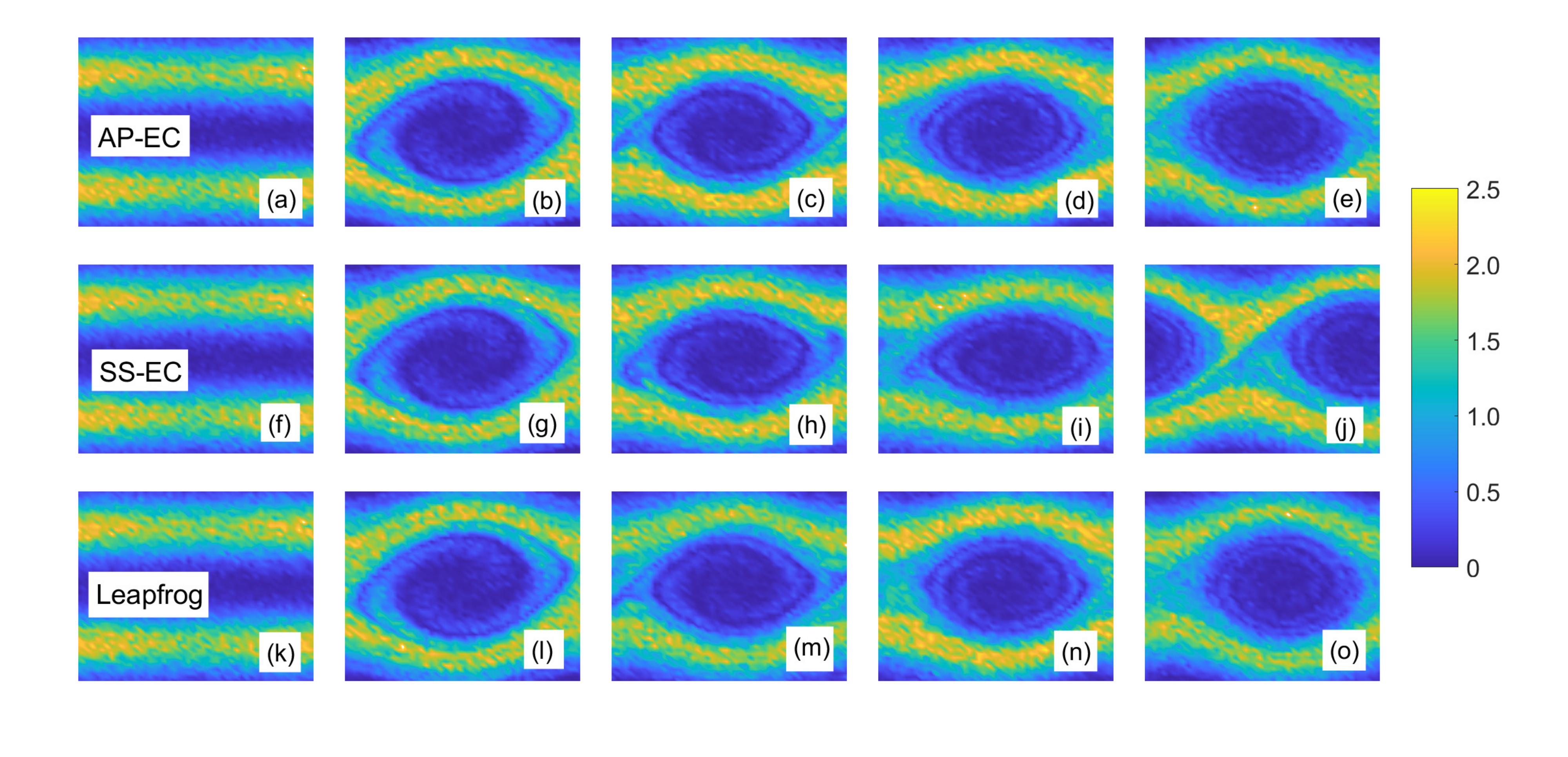}
    \caption{Two-stream instability with $\lambda=1$ and $\Delta t=0.01$. The $x-v_{x}$ phase space distribution at $t=0$, $t=20$, $t=40$, $t=60$ and $t=80$~(each column for one moment), (a-e): AP-EC; (f-j): SS-EC; (k-o): leapfrog. For all these figures, the $x$-axis is the $x$ positions of particles, the $y$-axis is the $x$-part of velocities, while $z$-axis~(color bar) denotes the distribution density of $(x,v_{x})$ in each cell of the $x-v_{x}$ plane. Range  of color bar is $[0,2.5]$.}
    \label{fig: twostream-l1-dt001-k03-10000-x-vx}
\end{figure}

We intend to conduct a series of simulations with smaller time steps to recalculate the phase space distribution at $t=80$ for the SS-EC scheme.
Fig. \ref{fig: twostream-convergence-strang-splitting} indicates that the results of the SS-EC will gradually coincide with that of the AP-EC~(shown in panel (d)). For the SS-EC, it requires a time step ten times smaller than that of the AP-EC to capture the evolution of macro particles accurately.

\begin{figure}[!htb]
    \centering
    \includegraphics[width=0.9\textwidth]{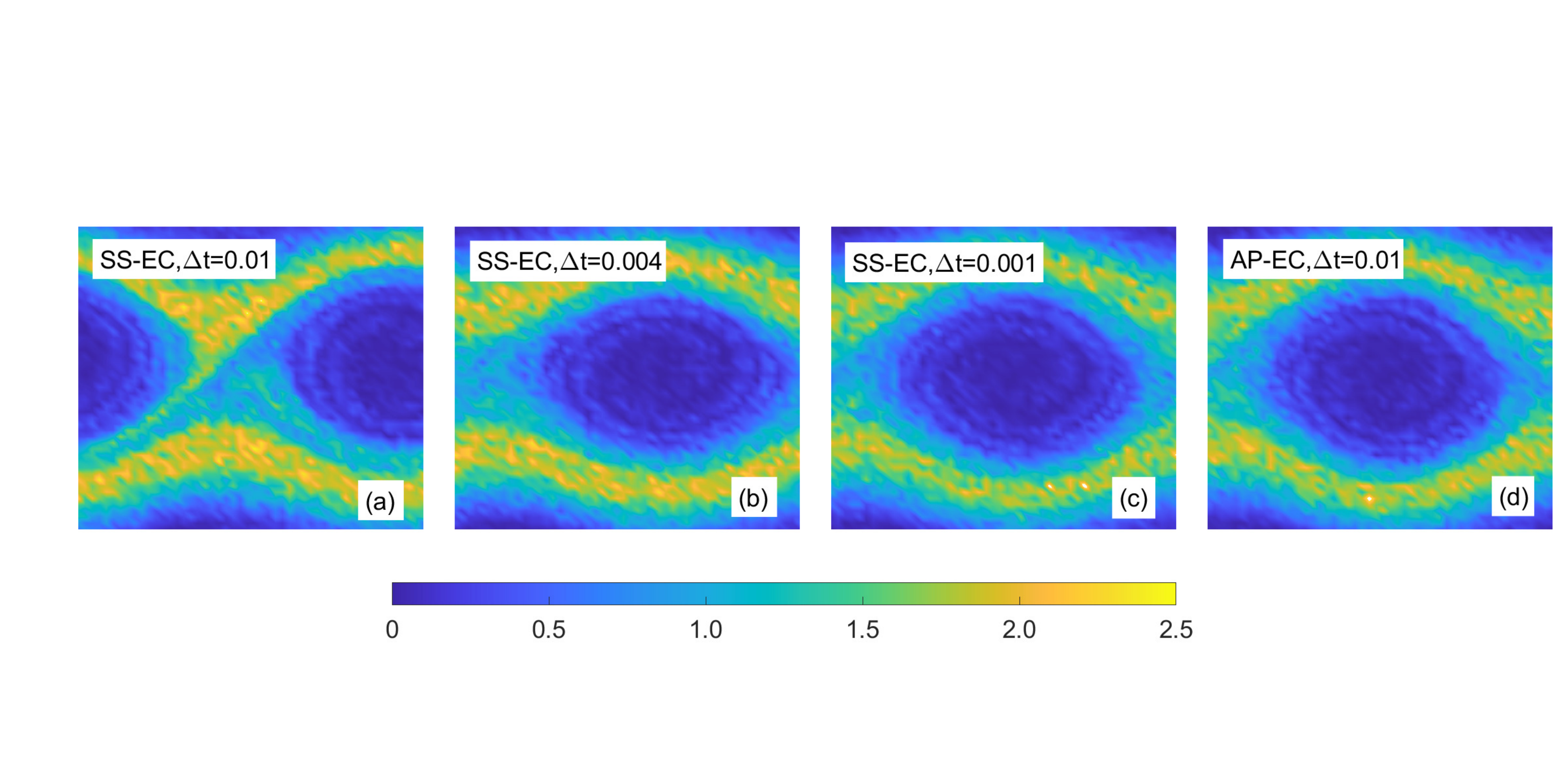}
    \caption{Convergence of $x-v_{x}$ phase space distribution for the Strang splitting scheme in two-stream instability at $t=80$. (a)~SS-EC with $\Delta t=0.01$; (b)~SS-EC with $\Delta t=0.004$; (c)~SS-EC with $\Delta t=0.001$; (d)~AP-EC with $\Delta t=0.01$.}
    \label{fig: twostream-convergence-strang-splitting}
\end{figure}

\subsection{Bump-on-tail instability}
Bump-on-tail instability is one of the fundamental and essential instabilities in plasma simulations \cite{Shoucri11}. Early numerical experiments have been conducted in the Vlasov-Poisson system \cite{Georg_Knorr_1977,Cheng1976TheIO,Gagn1977ASS}. In this test case, the system is initially perturbed both in $\bs{x}$ space and in $\bs{v}$ space, with the initial distribution function as \cite{kraus2017gempic}:
\begin{equation}
    \displaystyle f_{0}(\bs{x},\bs{v})=\frac{1}{2\pi L_{x}L_{y}}\left(1+\alpha\cos(\bs{k}\cdot\bs{x})\right)\left(\delta e^{-\frac{v_{1}^{2}}{2}}+2(1-\delta)e^{-2(v_{1}-v_{d})^{2}}\right)e^{-\frac{v_{2}^{2}}{2}}
    \label{eq: bump-on-tail-initial-distribution}
\end{equation}
in which we take $\bs{k}=(0.3,0)$, $\alpha=0.1$, $\delta=0.9$ and $v_{d}=3.5$.

We first focus on the time evolution of  the total energy, the $x$-component electric energy, and the  total electric energy~(Fig. \ref{fig: bump-on-tail-l1-dt001-k03-10000-g0}). From the time evolution of the total electric energy in panel (c) of Fig. \ref{fig: bump-on-tail-l1-dt001-k03-10000-g0}, we notice that the electric energy shows a reduction with time until about $t=2$, which can be explained numerically from the temporal decrease of the electric field $\bs{E}_{y}$. One can observe that the AP-EC and SS-EC agrees with each other in the electric energy of the $x$-component, and the leapfrog curve has an obvious deviation since $t=30$.
Fig. \ref{fig: bump-on-tail-l1-dt001-k03-10000-g0-distribution-vx} plots the distribution function of $v_{x}$. Curves of all the three schemes are in good accordance with our initialization in Eq. \ref{eq: bump-on-tail-initial-distribution}, for the reason that the maximum appears at $v_{d}=3.5$ with $v_{d}$ being the velocity of electron beam drift of our system.

\begin{figure}[!htb]
    \centering
    \includegraphics[width=0.3\textwidth]{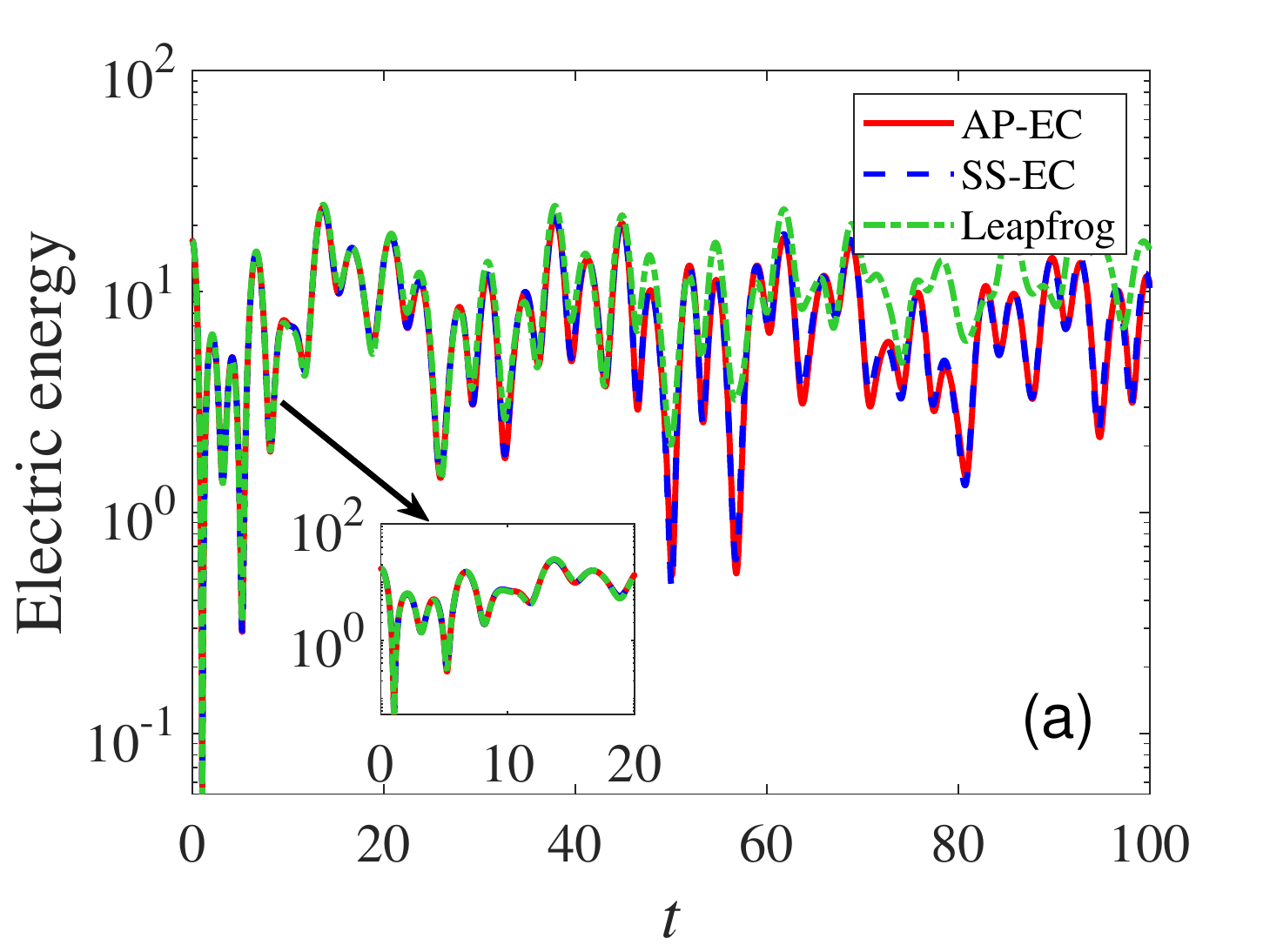}
    \includegraphics[width=0.3\textwidth]{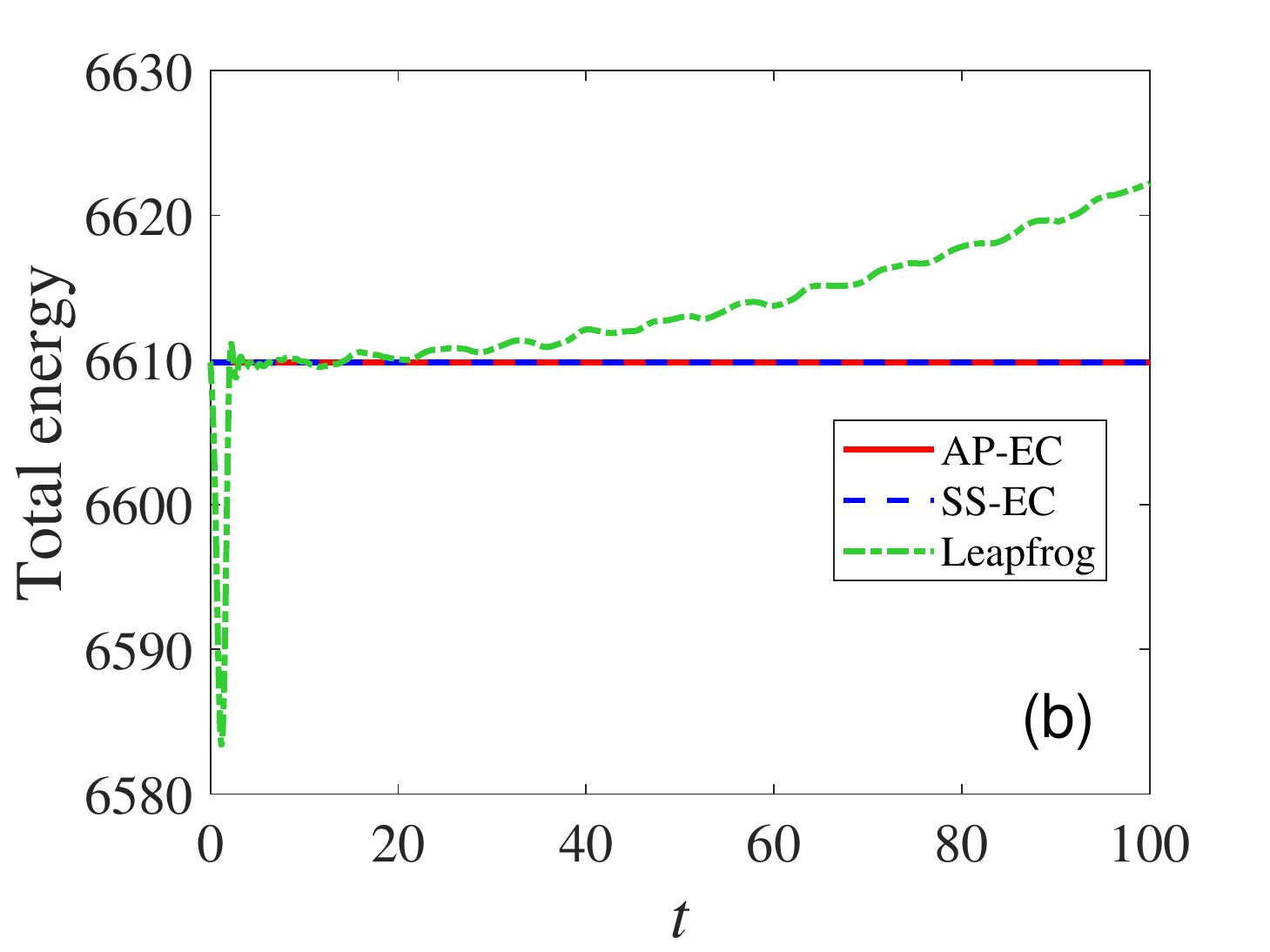}
    \includegraphics[width=0.3\textwidth]{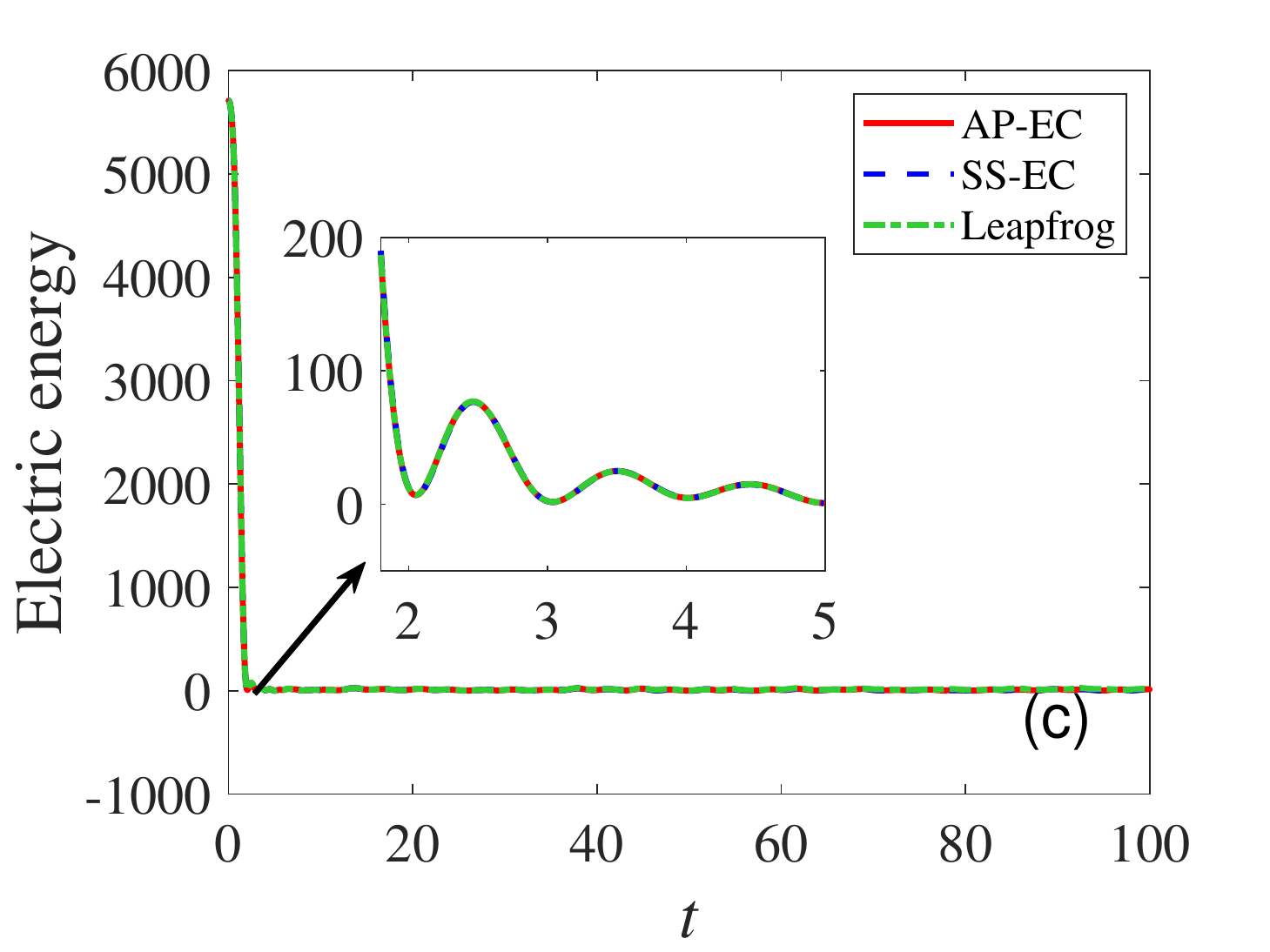}
    \caption{Bump-on-tail instability. $\lambda=1$, $\Delta t=0.01$. (a)~Electric energy of $x$-component; (b)~Total energy; (c)~Electric energy.}
    \label{fig: bump-on-tail-l1-dt001-k03-10000-g0}
\end{figure}

\begin{figure}[!htb]
    \centering
    \includegraphics[width=0.3\textwidth]{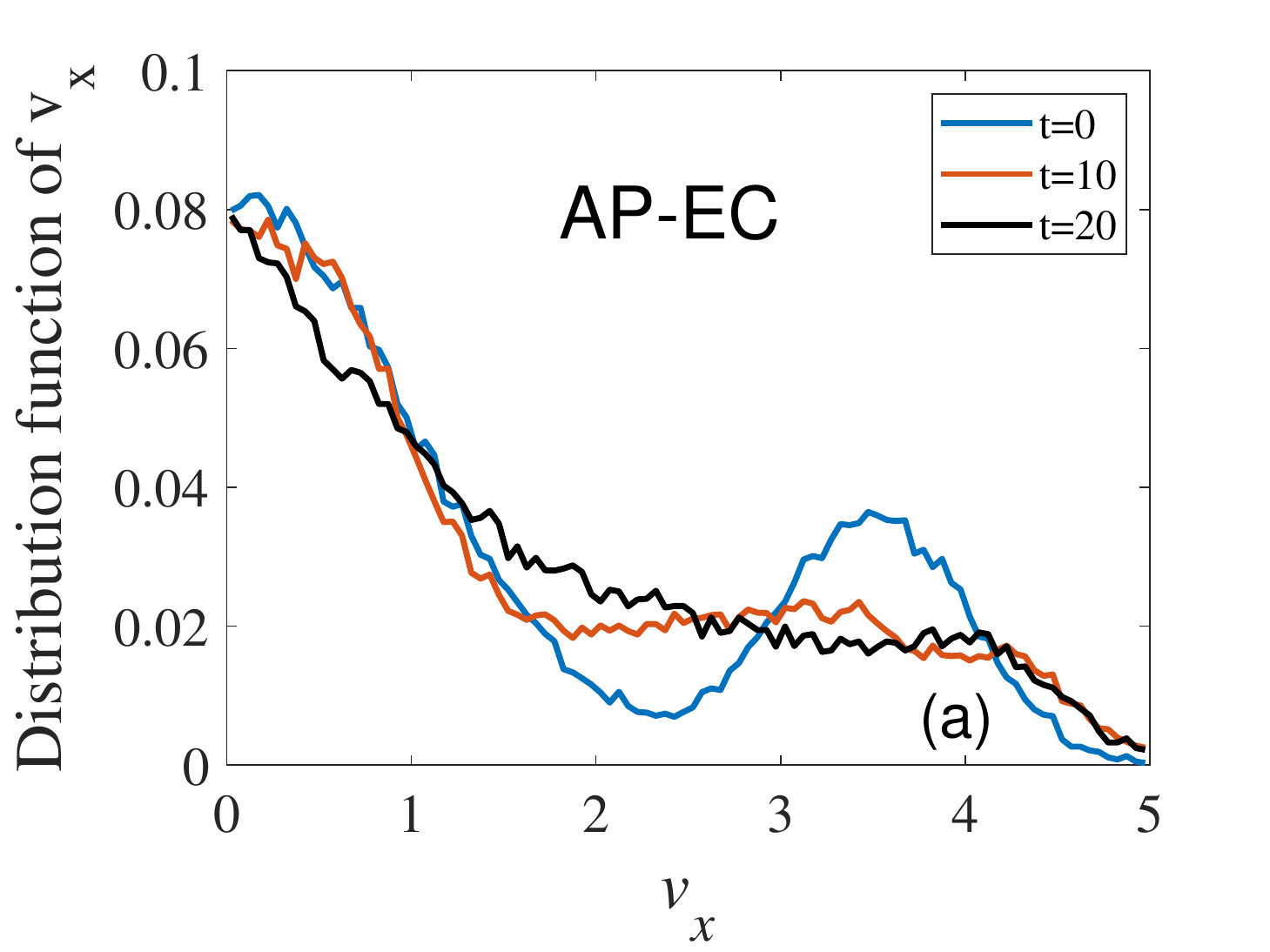}
    \includegraphics[width=0.3\textwidth]{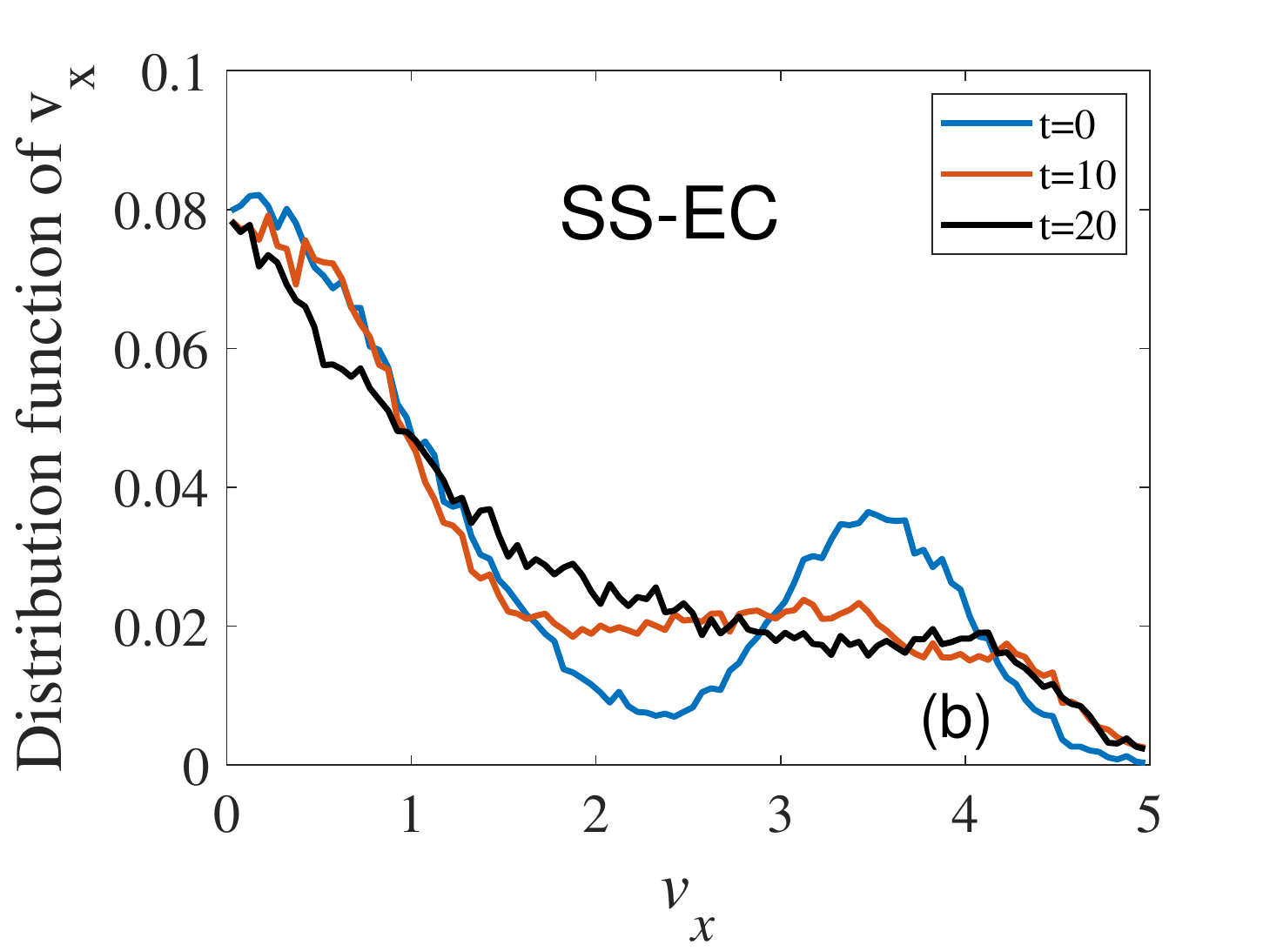}
    \includegraphics[width=0.3\textwidth]{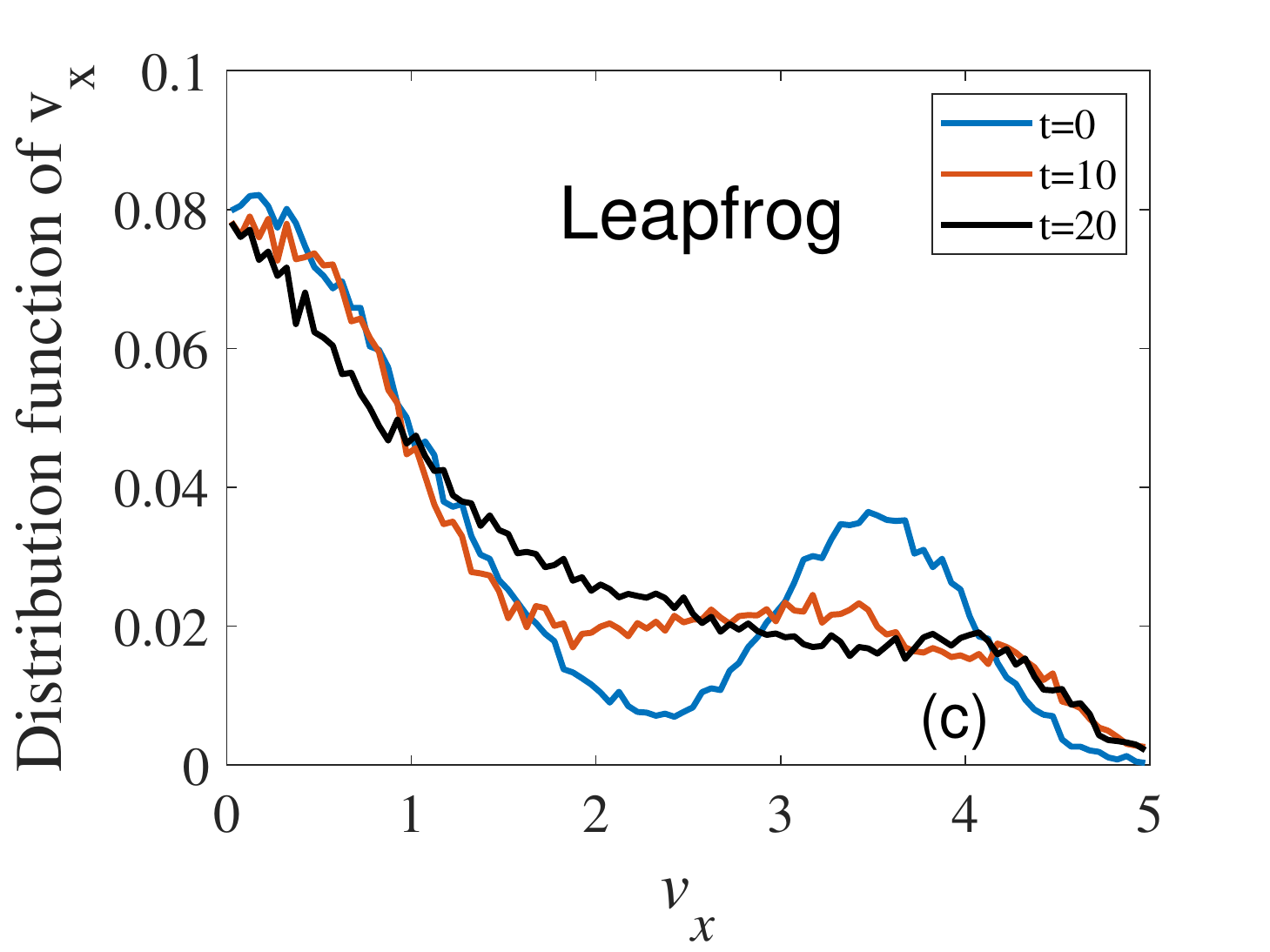}
    \caption{Distribution function of $v_{x}$ at $t=0$, $t=10$, $t=20$.}
    \label{fig: bump-on-tail-l1-dt001-k03-10000-g0-distribution-vx}
\end{figure}

Fig. \ref{fig: bump-on-tail-l1-dt001-k03-10000-g0-Ex} displays the profiles of electric field $\bs E_{x}$ of the AP-EC, SS-EC and leapfrog, respectively, where snapshots are saved at $t=0$, $t=10$, $t=20$ and $t=30$. Results imply that the AP-EC and SS-EC produce similar electric fields, while the leapfrog with $\Delta t=0.01$ gives different electric fields, in accordance with the results of Fig. \ref{fig: bump-on-tail-l1-dt001-k03-10000-g0}(a). Additional calculations show that $\bs{E}_{x}$ of leapfrog gets closer to that of the AP-EC and SS-EC with gradually decreased time steps.


\begin{figure}[!htb]
    \centering
    \includegraphics[width=0.95\textwidth]{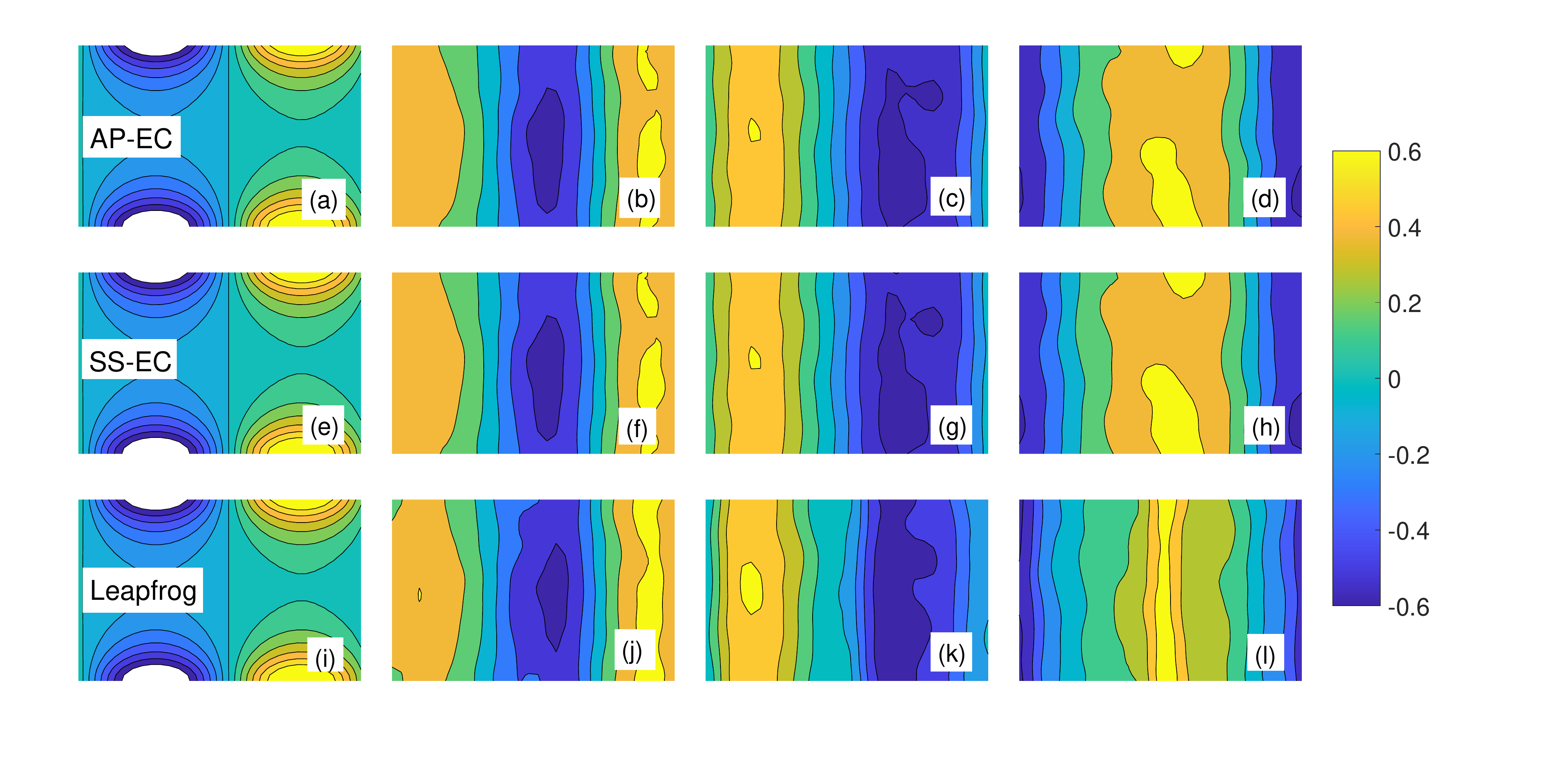}
    \caption{$\lambda=1$, $\Delta t=0.01$. Contour plot of electric field $\bs E_{x}$ at $t=0$, $10$, $20$ and $30$~(each column for one moment, respectively). (a-d): AP-EC; (e-h): SS-EC; (i-l): leapfrog.}
    \label{fig: bump-on-tail-l1-dt001-k03-10000-g0-Ex}
\end{figure}


\section{Conclusions}
\label{sec: conclusion}

In this paper we have developed an efficient energy-conserving implicit PIC method for approximating the high-dimensional electrostatic Vlasov system.  A constrained Vlasov-Amp\`ere system, reformulated from the original Vlasov-Poisson system, is introduced  as  the cornerstone of the proposed scheme. In order to properly discretize the solenoidal and irrotational fields involved in the VA system, a structure-preserving Fourier method is present, which exactly preserves the divergence-/curl-free constraints. With the help of the reformulated system and the novel structure-preserving Fourier method, a fully-implicit scheme based on time-centered temporal discretization is then proposed, which satisfies the discrete energy conservation regardless of the time step sizes. To accelerate the convergence of the resultant nonlinear system, an asymptotic-preserving preconditioner stemming from the generalized Ohm's law is employed. It can be viewed as a linearized and uniform approximation of the CN scheme for various Debye lengths. Together with a preconditioned Anderson-acceleration algorithm, the proposed fully-implicit scheme is robust and computationally efficient. In order to further reduce the computational cost, an energy-conserving Strang operator splitting method, which decouples the evolution of particle positions from the solution of particle velocities and electromagnetic fields, is further proposed with the help of the structure-preserving Fourier discretizations.

We have tested the proposed methods and compared their performance with the classical leapfrog method,  using extensive benchmark tests such as the Landau damping, the two-stream instability and the bump-on-tail instability. We
have shown that the asymptotic-preserving preconditioner has the merits of improved robustness and efficiency, and the proposed fully-implicit method generates physically accurate results for various time step sizes and Debye lengths, thus is more suitable for simulating complex plasmas with multiple physical scales. 

\section{Acknowledgement}
Z. Xu acknowledges the support from the NSFC (grant No. 12071288).
Z. Yang acknowledges the support from the NSFC (No. 12101399) and the Shanghai Sailing Program (No. 21YF1421000).
This work is also funded by the Strategic Priority Research Program of Chinese Academy of Sciences (grant Nos. XDA25010402 and XDA25010403).

\begin{appendices}
\label{appendix}

\section{The Anderson-accelerated iteration}

The nonlinear problem $\bs {\mathcal F}(\bs \xi)=\bs 0$ can be recast into an equivalent fixed-point problem
$
\bs \xi=\mathcal{T}(\bs \xi):= \bs {\mathcal F}(\bs \xi)+\bs \xi.
$
The Anderson-acceleration algorithm regarding this fixed-point problem is summarized in Algorithm \ref{alg: AAalg}, which is transferable for all fixed-point problems.

\begin{breakablealgorithm}\label{alg: AAalg}
\caption{Anderson-acceleration algorithm}
\centering
\begin{algorithmic}[1]
\Require Initial guess $\bs{\xi}^{(0)}$, truncation depth $m\geq 1$, maximum iteration $K\geq 1$ and tolerance $0<\varepsilon \ll 1$.
\Ensure The numerical solution $\bs{\xi}$.
\State Set $k=1$,  $\bs{ {\xi}}^{(1)}=\mathcal{T}(\bs{\xi}^{(0)})$, $\bs {\mathcal{F}}(\bs \xi^{(1)})=\mathcal{T}(\bs{\xi}^{(1)})-\bs{ {\xi}}^{(1)}$.
\While{$k<K$ $\&$ $\|\bs {\mathcal F}(\bs \xi^{(k)})\|\leq \varepsilon$}
\State Set $m_{k}=\min \{m,k\}$.
\State Set $F^k=\big(   \bs {\mathcal{F}}(\bs \xi^{(k-m_k)}),\cdots, \bs {\mathcal{F}}(\bs \xi^{(k)}) \big)$.\vspace{4pt}
\State Find a column vector $\bs{\gamma}^{(k)}=(\gamma_0^{(k)},\gamma_1^{(k)},\cdots,\gamma_{m_k}^{(k)})^T$ such that
\begin{equation}
    \displaystyle\min\limits_{\bs \gamma^{(k)}}\Big \|\sum\limits_{j=0}\limits^{m_{k}}     \gamma_j^{(k)}   \bs {\mathcal{F}}\big(\bs \xi^{(k-m_k+j)}\big)    \Big \|_2,\;\text{ s.t.}\,\sum\limits_{j=0}\limits^{m_{k}}\gamma_{j}^{(k)}=1.
\end{equation}
\State Set $\bs{\xi}^{(k+1)}=\sum\limits_{j=0}\limits^{m_{k}}     \gamma_j^{(k)}   \bs {\mathcal{T}}(\bs \xi^{(k-m_k+j)})$,\;\;$\bs {\mathcal{F}}(\bs \xi^{(k+1)})=\mathcal{T}(\bs{\xi}^{(k+1)})-\bs{ {\xi}}^{(k+1)}$.
\State Set $k=k+1$.
\EndWhile
\State $\bs{\xi}=\bs{\xi}^{(k+1)}$.
\end{algorithmic}
\end{breakablealgorithm}

\end{appendices}

\bibliographystyle{abbrv}
\bibliography{main}

\end{document}